\def\qed{\leavevmode\unskip\penalty9999 \hbox{}\nobreak\hfill
	\quad\hbox{\leavevmode  \hbox to.77778em{%
			\hfil\vrule   \vbox to.675em%
			{\hrule width.6em\vfil\hrule}\vrule\hfil}}
	\par\vskip3pt}
\newtheorem{theorem}{Theorem}
\newtheorem{corollary}{Corollary}
\newtheorem{lemma}{Lemma}
\newtheorem{example}{Example}
\newcommand{\Rmnum}[1]{\expandafter\@slowromancap\romannumeral #1@}
\begin{document}

\begin{center}
		\bf{Trade-off relations of geometric coherence}
\end{center}

\begin{center}
Bingyu Hu and Ming-Jing Zhao$^\ast$

\vspace{2ex}

{ School of Science, Beijing Information Science and Technology University, Beijing, 100192, P. R. China\\}

\end{center}

{\bf Abstract}
Quantum coherence is an important quantum resource and it is intimately related to various research fields. The geometric coherence is a coherence measure both operationally and geometrically. We study the trade-off relation of geometric coherence in qubit systems.
We first derive an upper bound for the geometric coherence by the purity of quantum states. Based on this, a complementarity relation between the quantum coherence and the mixedness is established. We then derive the quantum uncertainty relations of the geometric coherence on two and three general measurement bases in terms of the incompatibility respectively, which turn out to be state-independent for pure states.
These trade-off relations provide the limit to the amount of quantum coherence. As a byproduct,
the complementarity relation between the minimum error probability for
discriminating a pure-states ensemble and the mixedness of quantum states is established.

{\bf Keywords} Geometric coherence, Quantum uncertainty relation, Complementarity relation

{$^\ast$}Correspondence: zhaomingjingde@126.com

\section{Introduction}

Coherence stemmed from superposition principle is the foundation of multiparticle interference and
plays an important role in physics \cite{A. Streltsov-rev, M. Hu,K.D. Wu}.
After the first resource framework for quantum coherence was introduced in Ref. \cite{T. Baumgratz}, the characterization and quantification of quantum coherence have attracted much attention. Some coherence measures are proposed to quantify the coherence such as
the $l_1$ norm of coherence \cite{T. Baumgratz}, the relative entropy of coherence \cite{T. Baumgratz}, the intrinsic randomness of coherence \cite{X. Yuan}, coherence concurrence \cite{X. Qi}, the robustness of coherence \cite{C. Napoli}, and the fidelity-based coherence \cite{T. Baumgratz,Liu2017} and so on.

The geometric coherence is also a coherence measure based on the fidelity \cite{Streltsov}. It amounts to the maximum bipartite geometric entanglement that can be
generated via incoherent operations applied to the system and an incoherent ancilla \cite{Streltsov}. Moreover, the geometric coherence is equal to the minimum error probability to discriminate a set of pure states with von Neumann measurement \cite{Chunhe Xiong}.
These geometric characterizations and operational interpretations make the geometric coherence prominent as the coherence measure. However, the geometric coherence is rather difficult to calculate.
So some evaluations have been provided \cite{Hai-Jun Zhang}, which are tight for some special quantum states.
The complementarity relation between the geometric coherence and the mixedness has been also revealed which characterizes further the geometric coherence quantitatively\cite{Uttam Singh,Hai-Jun Zhang,X.Che}.

More than that, the quantum coherence is basis-dependent and changes with different bases, so it is interesting and meaningful to investigate the constraints of the quantum coherence on different measurement bases. That is called the quantum uncertainty relation in the sense that coherence and quantum uncertainty are dual on the same quantum substrate \cite{Luo2017,Xiao Yuan}. Specifically, in qubit systems, the quantum uncertainty relations in terms of the relative entropy of coherence, the coherence of formation and the $l_{1}$ norm of coherence under any two orthonormal bases are derived \cite{Xiao Yuan}. In high dimensional systems,
the quantum uncertainty relations of Tsallis relative entropy of coherence and R\'{e}nyi relative entropy of coherence are derived \cite{Rastegin2023,FuGang Zhang},
which are then generalized to the sandwiched R\'{e}nyi relative entropy of coherence and unified ($\alpha$,$\beta$)-relative entropy of coherence
\cite{Haijin Mu}. The quantum uncertainty relations for coherence measure based on the Wigner-Yanase skew information are also
established \cite{Shun-Long Luo}. More generally, the quantum uncertainty relation with respect to positive operator-valued measurements is considered \cite{Rastegin2021} and the quantum uncertainty relation in bipartite systems is also discussed \cite{Singh2016}.

In this work, we investigate the trade-off relations of the geometric coherence in qubit systems.
Some basic concepts of the geometric coherence are presented in Sec. II. In Sec. III, we
derive an upper bound for the geometric coherence in terms of the purity of quantum states. Based on this, a complementarity relation between the quantum coherence and the mixedness is established. In Sec. IV, we derive the quantum uncertainty relations of the geometric coherence on two and three general measurement bases in terms of the purity of quantum states and the incompatibility of the bases.
In Sec. V, we establish a complementarity relation between the minimum error probability for discriminating a pure-state ensemble and the mixedness of quantum states. We conclude
in Sec. VI.

\section{Preliminaries}

The resource theory framework of quantum coherence is composed of two basic elements:
incoherent states and incoherent operations. Considering a $d$-dimensional Hilbert space $\mathcal{H}_{d}$ with an orthonormal
basis $\mathbb{X}=\{|x_i\rangle\}_{i=1}^d$, we call the diagonal quantum state $\rho=\sum_{i = 1}^d \lambda_i|x_i\rangle\langle x_i|$
as the incoherent state. This set of incoherent
states is labeled by $\mathcal{I}_{\mathbb{X}}$.
Quantum operation
$\Phi(\rho)=\sum_{l} K_l\rho K_l^{\dag}$ with $\sum_{l} K_l^{\dag} K_l=I$ is an incoherent operation if it fulfills
$K_l \sigma K_l^{\dag}/tr(K_l \sigma K_l^{\dag})\in \mathcal{I}_{\mathbb{X}}$ for all
$\sigma\in\mathcal{I}_{\mathbb{X}}$ and for all $l$.

The quantum coherence is degreed by a nonnegative function named the coherence measure, which should satisfy the following requirements \cite{T. Baumgratz}:
\begin{enumerate}[(C1)]
\item Nonnegativity: $C(\rho)\geq0$ and $C(\rho)=0$ for all $\rho\in\mathcal{I}_{\mathbb{X}}$;
\item Monotonicity: $C$ does not increase under the action of incoherent operations: $C(\Phi(\rho))\leq C(\rho)$ for any incoherent operation $\Phi$;
\item Strong monotonicity: $C$ does not increase on average under selective incoherent operations: $\sum_{l}p_{l}C(\rho_{l})\leq C(\rho)$ where $p_{l}=tr(K_{l}\rho K_{l}^{\dagger})$, $\rho_{l}=K_{l}\rho K_{l}^{\dagger}/p_{l}$, and $\Phi(\cdot)=\sum_{l} K_l(\cdot) K_l^{\dag}$ is an incoherent operation;
\item Convexity: $C(\rho)\leq \sum_{i} p_{i}C(\rho_{i})$ for any $\rho=\sum_{i}p_{i}\rho_{i}$ and $p_i\geq 0$, $\sum_i p_i =1$.
\end{enumerate}

The geometric coherence is a coherence measure defined as  \cite{Streltsov}
\begin{align}
	C_{g}^{\mathbb{X}}(\rho)=1-\max_{\sigma\in\mathcal{I}_{\mathbb{X}}}F(\rho,\sigma),
	\end{align}
where the maximum is taken over all incoherent states $\sigma\in\mathcal{I}_{\mathbb{X}}$, and
\begin{eqnarray}\label{eq fidelity}
F(\rho,\sigma)=(tr\sqrt{\sqrt{\rho} \sigma \sqrt{\rho}})^{2}
\end{eqnarray}
is the fidelity of states $\rho$ and $\sigma$ \cite{Jozsa}.
Here one should note that the fidelity in Eq. (\ref{eq fidelity}) is the square of the Uhlmann fidelity in Ref. \cite{Uhlmann,M.A.}.
For pure states, the geometric coherence is $C_{g}^{\mathbb{X}}(|\psi\rangle)=1-\max_{|x_{i}\rangle\in \mathbb{X}}|\langle x_{i}|\psi \rangle|^{2}$, where the
maximum is taken over all $|x_{i}\rangle\in \mathbb{X}=\{|x_i\rangle\}_{i=1}^d$. For high dimensional mixed states,
the geometric coherence is formidably difficult to calculate. Then Ref. \cite{Hai-Jun Zhang} proved the upper and lower bounds of the geometric coherence by the sub-fidelity and super-fidelity \cite{J.A. Miszczak},
which induce an analytical formula for qubit states as
\begin{align}\label{AA}
	C_{g}^{\mathbb{X}}(\rho)=\frac{1}{2}-\frac{1}{2}\sqrt{1-2(tr(\rho^{2})-\sum\limits_{i=1}\limits^2 \langle x_{i}|\rho|x_{i} \rangle^{2})}.
	\end{align}
Next we shall employ Eq. (\ref{AA}) and discuss the complementarity relation and quantum uncertainty relation of the geometric coherence in qubit systems.

\section{The complementarity relation of the geometric coherence}

In this section we shall discuss the complementarity relation of the geometric coherence in qubit systems. Throughout the paper,
we denote $\mathcal{P}=tr(\rho^{2})$ as the purity of quantum state $\rho$. The purity satisfies $\frac{1}{2}\leq \mathcal{P} \leq 1$. $\mathcal{P} = 1$ if and only if $\rho$ is pure. $\mathcal{P} = \frac{1}{2}$ if and only if $\rho$ is maximally mixed, $\rho=I/2$.
We also refer to $\mathbb{B}_{\rho}=\{|\phi_1\rangle, |\phi_2\rangle\}$ as the basis composed by the orthonormal eigenvectors of the nonmaximally mixed state $\rho$.
Then the geometric coherence $C_{g}^{\mathbb{X}}$ and the purity $\mathcal{P}$ are factually linked together closely.

\begin{theorem}\label{th1*}
For any qubit state $\rho$ and any orthonormal basis $\mathbb{X}=\{|x_{i}\rangle,i=1,2\}$, the geometric coherence $ C_{g}^{\mathbb{X}}(\rho)$ is bounded from above by the purity $\mathcal{P}$ as
\begin{align}\label{upp}
	 C_{g}^{\mathbb{X}}(\rho) \leq \frac{1}{2}-\frac{1}{\sqrt{2}}\sqrt{1-\mathcal{P}}.
	\end{align}
The inequality becomes an equality if and only if $\langle x_{1}|\rho|x_{1} \rangle=\langle x_{2}|\rho|x_{2} \rangle=\frac{1}{2}$.
	\end{theorem}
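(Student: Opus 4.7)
My plan is to use the closed-form expression (\ref{AA}) for $C_g^{\mathbb{X}}(\rho)$ on qubits as a starting point and algebraically reduce the desired bound (\ref{upp}) to a trivial inequality on the diagonal entries of $\rho$ in the basis $\mathbb{X}$. Writing $p_i := \langle x_i|\rho|x_i\rangle$ and $\mathcal{P}=\mathrm{tr}(\rho^2)$, formula (\ref{AA}) becomes
\begin{equation*}
C_g^{\mathbb{X}}(\rho) = \frac{1}{2} - \frac{1}{2}\sqrt{1-2\mathcal{P}+2(p_1^2+p_2^2)}.
\end{equation*}

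Thus (\ref{upp}) is equivalent to $\frac{1}{2}\sqrt{1-2\mathcal{P}+2(p_1^2+p_2^2)} \geq \frac{1}{\sqrt{2}}\sqrt{1-\mathcal{P}}$, and since both sides are manifestly nonnegative, I can square them. After squaring and clearing the factor of $4$, the bound collapses to the purity- and basis-independent statement
\begin{equation*}
p_1^2+p_2^2 \geq \tfrac{1}{2}.
\end{equation*}

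This last inequality is immediate from the constraint $p_1+p_2 = \mathrm{tr}(\rho)=1$ together with $p_i\geq 0$: by the Cauchy--Schwarz inequality (equivalently, convexity of $x\mapsto x^2$ or the power-mean inequality on two terms), $p_1^2+p_2^2 \geq \tfrac{1}{2}(p_1+p_2)^2 = \tfrac{1}{2}$, with equality if and only if $p_1=p_2=\tfrac{1}{2}$. Since all manipulations above (including the squaring) were reversible, this yields both (\ref{upp}) and the stated equality condition $\langle x_1|\rho|x_1\rangle = \langle x_2|\rho|x_2\rangle = \tfrac{1}{2}$.

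I do not anticipate a serious obstacle: the only thing to watch is that the argument inside the square root in (\ref{AA}) is indeed nonnegative (guaranteed since $C_g^{\mathbb{X}}(\rho)$ is a real coherence measure) so that squaring is legitimate, and that the Cauchy--Schwarz equality case is tracked carefully to obtain the ``if and only if'' claim. The key insight is simply that (\ref{AA}) repackages the geometric coherence so that the bound in terms of $\mathcal{P}$ alone becomes a bound on the diagonal probabilities, for which the uniform distribution is the obvious extremizer.
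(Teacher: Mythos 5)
Your proposal is correct and follows essentially the same route as the paper's own proof: both apply the closed-form expression \eqref{AA} and reduce the bound to the elementary inequality $\langle x_1|\rho|x_1\rangle^2+\langle x_2|\rho|x_2\rangle^2\geq\frac{1}{2}(\langle x_1|\rho|x_1\rangle+\langle x_2|\rho|x_2\rangle)^2=\frac{1}{2}$, with equality exactly when the diagonal entries are both $\frac{1}{2}$. Your additional remarks about the legitimacy of squaring and the reversibility of the steps are sound but add nothing beyond the paper's one-line argument.
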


\begin{proof}
Since $2(\langle x_{1}|\rho|x_{1} \rangle ^{2}+\langle x_{2}|\rho|x_{2} \rangle ^{2})\geq (\langle x_{1}|\rho|x_{1} \rangle+ \langle x_{2}|\rho|x_{2} \rangle )^2=1$, so by Eq. \eqref{AA}, $	C_{g}^{\mathbb{X}}(\rho)=\frac{1}{2}-\frac{1}{2}\sqrt{1-2(\mathcal{P}-\langle x_{1}|\rho|x_{1} \rangle ^{2}-\langle x_{2}|\rho|x_{2} \rangle ^{2})} \leq \frac{1}{2}-\frac{1}{\sqrt{2}}\sqrt{1-\mathcal{P}}$. The inequality becomes an equality if and only if $\langle x_{1}|\rho|x_{1} \rangle=\langle x_{2}|\rho|x_{2} \rangle=\frac{1}{2}$. This completes the proof.
\end{proof}

Theorem \ref{th1*} tells us that
$ C_{g}^{\mathbb{X}}(\rho) = \frac{1}{2}-\frac{1}{\sqrt{2}}\sqrt{1-\mathcal{P}}$ if and only if
$\langle x_{1}|\rho|x_{1} \rangle=\langle x_{2}|\rho|x_{2} \rangle=\frac{1}{2}$,
which means the density matrix of $\rho$ in the orthonormal basis $\mathbb{X}$ has the equal diagonal entries $1/2$. This class of quantum states are called the  assisted maximally coherent states in Ref. \cite{MZhao} and can be decomposed as the convex combination of maximally coherent states $\frac{1}{\sqrt{2}}(|x_1\rangle + e^{{\rm i}\theta} |x_2\rangle)$. This also means the geometric coherence $ C_{g}^{\mathbb{X}}(\rho)$ reaches the upper bound $\frac{1}{2}-\frac{1}{\sqrt{2}}\sqrt{1-\mathcal{P}}$ if and only if $\rho$ is the maximally mixed state or for nonmaximally mixed states, the orthonormal basis $\mathbb{B}_{\rho}=\{|\phi_1\rangle, |\phi_2\rangle\}$ and the orthonormal basis $\mathbb{X}=\{|x_1\rangle, |x_2\rangle\}$ are mutually unbiased, $|\langle \phi_i |x_j\rangle|^2=1/2$, for $|\phi_i\rangle \in \mathbb{B}_{\rho}$ and $|x_j\rangle\in \mathbb{X}$, $\forall i, j$.

Furthermore, it is worthy to emphasize that the upper bound of geometric coherence in Theorem \ref{th1*} is basis-independent. It just depends on the purity of quantum states. This is a remarkable difference between other basis-dependent upper bounds. Now we make a comparison between our upper bound in Theorem \ref{th1*} and the upper bounds derived in Ref. \cite{Chunhe Xiong} in evaluating the geometric coherence.

\begin{example}
For maximally coherent mixed states in a qubit system \cite{Uttam Singh}
\begin{align}\label{exstate1}
	\rho_{m}=\frac{1-q}{2}I+ q|\psi^+\rangle\langle\psi^+|,
	\end{align}
where $q\in[0,1]$, $|\psi^+\rangle=\frac{1}{\sqrt{2}}(|0\rangle+|1\rangle)$ is the maximally coherent state in the computational basis $\{|0\rangle,|1\rangle\}$. If we fix $\mathbb{X}=\{|0\rangle,|1\rangle\}$, we find our upper bound in Eq. (\ref{upp}) is exactly the geometric coherence
\begin{eqnarray}\label{eq1 example1}
    C_{g}^{\mathbb{X}}(\rho_{m})=\frac{1-\sqrt{1-q^{2}}}{2}.
	\end{eqnarray}
The upper bounds derived in Ref. \cite{Chunhe Xiong} are $l_{3}=\frac{q^{2}}{2}$ and $l_{4}=q$.  Obviously,
both $l_3< C_{g}^{\mathbb{X}}(\rho_{m})$ and $l_4< C_{g}^{\mathbb{X}}(\rho_{m})$, so our upper bound has an advantage over
the upper bounds in Ref. \cite{Chunhe Xiong} for this example.
\end{example}

Additionally, Theorem \ref{th1*} also indicates a complementarity relation between the geometric coherence and the mixedness by utilizing the normalized linear entropy $S_{L}(\rho)=\frac{d}{d-1}(1-\mathcal{P})$ as the mixedness quantifier \cite{Nicholas A}.

\begin{corollary}\label{th1}
For any qubit state $\rho$ and any orthonormal basis $\mathbb{X}=\{|x_{i}\rangle,i=1,2\}$, the geometric coherence $C_{g}^{\mathbb{X}}(\rho)$ and the mixedness $S_{L}(\rho)$ should satisfy
\begin{align}
	 C_{g}^{\mathbb{X}}(\rho)+\frac{1}{2}\sqrt{S_{L}(\rho)} \leq \frac{1}{2}.
	\end{align}
The inequality becomes an equality if and only if $\langle x_{1}|\rho|x_{1} \rangle=\langle x_{2}|\rho|x_{2} \rangle=\frac{1}{2}$.
	\end{corollary}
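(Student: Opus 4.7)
The plan is to derive the corollary as a direct algebraic consequence of Theorem~\ref{th1*}. First, I would specialize the normalized linear entropy $S_{L}(\rho)=\frac{d}{d-1}(1-\mathcal{P})$ to qubit dimension $d=2$, which yields the clean identity $S_{L}(\rho)=2(1-\mathcal{P})$, and consequently $\frac{1}{2}\sqrt{S_{L}(\rho)}=\frac{1}{\sqrt{2}}\sqrt{1-\mathcal{P}}$. This step simply matches the two mixedness quantifiers used in the upper bound and in the target inequality.

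Next, I would substitute this identity into the upper bound from Theorem~\ref{th1*}, $C_{g}^{\mathbb{X}}(\rho)\leq \frac{1}{2}-\frac{1}{\sqrt{2}}\sqrt{1-\mathcal{P}}$, and transpose the mixedness term to the left-hand side. This immediately produces $C_{g}^{\mathbb{X}}(\rho)+\frac{1}{2}\sqrt{S_{L}(\rho)}\leq \frac{1}{2}$, which is precisely the claimed complementarity relation. Because the rewriting is an exact algebraic equivalence rather than a further estimate, the saturation condition is inherited verbatim from Theorem~\ref{th1*}: equality holds if and only if $\langle x_{1}|\rho|x_{1}\rangle=\langle x_{2}|\rho|x_{2}\rangle=\frac{1}{2}$.

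There is essentially no technical obstacle here; the corollary is a direct repackaging of Theorem~\ref{th1*} that re-expresses the purity factor as a linear-entropy mixedness. The conceptual payoff, rather than the derivation, is the main point: the bound displays a clean complementarity in which the sum of the geometric coherence and a simple monotone function of the mixedness is controlled by the universal constant $\frac{1}{2}$, independent of the choice of reference basis $\mathbb{X}$.
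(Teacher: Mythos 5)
Your proposal is correct and matches the paper's (implicit) derivation exactly: the paper obtains the corollary by substituting $S_{L}(\rho)=2(1-\mathcal{P})$ for $d=2$ into the bound of Theorem~\ref{th1*}, so that $\frac{1}{2}\sqrt{S_{L}(\rho)}=\frac{1}{\sqrt{2}}\sqrt{1-\mathcal{P}}$ and the inequality is a pure rearrangement. The equality condition carries over verbatim, just as you state.
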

This complementarity relation imposes a fundamental limit on the amount of coherence that can be extracted from quantum states with equal mixedness.
For any given mixedness $S_{L}(\rho)$ of quantum states ($S_{L}(\rho) \neq 1$, i.e. $\rho\neq I/2$) and given reference basis $\mathbb{X}$, the quantum states such that the orthonormal basis $\mathbb{B}_{\rho}$ and the reference basis $\mathbb{X}$ are mutually unbiased can reach the maximal possible geometric coherence.

\section{The quantum uncertainty relation of geometric coherence}

In this section, we present the quantum uncertainty relations of geometric coherence on two and three general measurement bases respectively. These quantum uncertainty relations depend on the purity of quantum state and incompatibility of bases.
For any two orthonormal bases $\mathbb{X}=\{|x_{i}\rangle,i=1,2\}$ and $\mathbb{Y}=\{|y_{i}\rangle,i=1,2\}$ in qubit systems, the incompatibility \cite{Maassen} between $\mathbb{X}$ and $\mathbb{Y}$ is
\begin{eqnarray}
\mathfrak{c}=\max_{|x_{i}\rangle\in\mathbb{X},|y_{j}\rangle\in\mathbb{Y}}|\langle x_{i}|y_{j}\rangle |^{2},
\end{eqnarray}
which is the maximum overlap of the two measurements. It describes the commutativity between two orthonormal bases. The incompatibility satisfies $\frac{1}{2} \leq \mathfrak{c} \leq 1 $.
Two orthonormal bases $\mathbb{X}=\{|x_{i}\rangle,i=1,2\}$ and $\mathbb{Y}=\{|y_{i}\rangle,i=1,2\}$ are called compatible if $\mathfrak{c}=1$, that is, $\{|y_{1}\rangle,|y_{2}\rangle\}=\{e^{{\rm i}\theta_1} |x_{1}\rangle, e^{{\rm i}\theta_2} |x_{2}\rangle\}$. Moreover, we also need the following lemma proved in Ref. \cite{Xiao Yuan} for the characterization of the feasible domain of the optimization problem next.
\begin{lemma}\label{lemma1}
Let $|x\rangle$, $|z\rangle$, $|r\rangle$ be any three $d$-dimensional normalized vectors, $a=|\langle x|r \rangle |^2$, $b=|\langle z|r \rangle |^2$, $c=|\langle z|x \rangle |^2$.
Then $a$, $b$, $c$ satisfy
	\begin{eqnarray*}
		a+b&\le 1+\sqrt{c} \ \ \text{and}\ \
|a-b|&\le \sqrt{1-c}.
	\end{eqnarray*}
	Especially, when $d=2$, we have one more inequality that
	\begin{eqnarray*}
		1-\sqrt{c}\le a+b.
	\end{eqnarray*}
\end{lemma}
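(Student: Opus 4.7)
The plan is to reparameterize the three overlaps by their Fubini--Study angles and reduce each claimed inequality to an elementary trigonometric estimate, with the triangle-type inequality for these angles as the main input. Concretely, I would set $\alpha=\arccos\sqrt a$, $\beta=\arccos\sqrt b$, $\gamma=\arccos\sqrt c$, all in $[0,\pi/2]$. The two key algebraic identities I would rely on are $a+b-1=\cos(\alpha+\beta)\cos(\alpha-\beta)$ and $a-b=\sin(\alpha+\beta)\sin(\beta-\alpha)$, both of which are one-line product-to-sum reductions of $\cos^2\alpha\pm\cos^2\beta$.

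Next I would generate the relations between the three angles by a direct decomposition. Writing $|r\rangle=e^{i\phi_1}\sqrt{a}|x\rangle+\sqrt{1-a}|r'\rangle$ and $|z\rangle=e^{i\phi_2}\sqrt{c}|x\rangle+\sqrt{1-c}|z'\rangle$ with $|r'\rangle,|z'\rangle$ orthogonal to $|x\rangle$, expanding $\langle z|r\rangle$ and applying the triangle inequality for complex numbers together with $|\langle z'|r'\rangle|\le 1$ yields $\sqrt b\le \sqrt{ac}+\sqrt{(1-a)(1-c)}=\cos(\alpha-\gamma)$, hence $\beta\ge |\alpha-\gamma|$. Performing the same decomposition with $|z\rangle$ (or $|r\rangle$) in place of $|x\rangle$ gives the two symmetric inequalities, so the Fubini--Study angles obey $|\alpha-\beta|\le \gamma\le\alpha+\beta$ and the two analogous statements.

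I would then assemble the two general inequalities. For $a+b\le 1+\sqrt c$: if $\alpha+\beta\ge\pi/2$ the product $\cos(\alpha+\beta)\cos(\alpha-\beta)$ is non-positive and the bound is immediate; otherwise $\gamma\le\alpha+\beta<\pi/2$ gives $\cos(\alpha+\beta)\cos(\alpha-\beta)\le \cos(\alpha+\beta)\le\cos\gamma=\sqrt c$. For $|a-b|\le\sqrt{1-c}$: since $|\alpha-\beta|\le\gamma$ and both lie in $[0,\pi/2]$, monotonicity of sine gives $\sin|\alpha-\beta|\le\sin\gamma$, and combined with $\sin(\alpha+\beta)\le 1$ this yields $|a-b|=\sin(\alpha+\beta)\sin|\alpha-\beta|\le\sin\gamma=\sqrt{1-c}$.

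For the $d=2$ strengthening I would exploit that the orthogonal complement of $|x\rangle$ is now one-dimensional, so $|r'\rangle$ and $|z'\rangle$ are parallel up to a phase and $|\langle z'|r'\rangle|=1$. The complex triangle inequality then also delivers a lower bound $\sqrt b\ge |\sqrt{ac}-\sqrt{(1-a)(1-c)}|=|\cos(\alpha+\gamma)|$, which, after a short case analysis on the sign of $\cos(\alpha+\gamma)$, is equivalent to $\alpha+\beta+\gamma\le\pi$. Plugging this into $a+b-1=\cos(\alpha+\beta)\cos(\alpha-\beta)$ and checking the two signs of $\cos(\alpha+\beta)$ finishes the proof of $1-\sqrt c\le a+b$. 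The main obstacle I expect is the phase book-keeping in the complex decomposition and the clean observation that in $d\ge 3$ the overlap $\langle z'|r'\rangle$ can have any modulus in $[0,1]$, while in $d=2$ its modulus is forced to equal $1$; once that dichotomy is isolated, everything else is a routine trigonometric estimate.
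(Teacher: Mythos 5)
Your proof is correct, but note that the paper itself does not prove this lemma at all --- it simply imports it from Ref.~\cite{Xiao Yuan}, so there is no in-paper argument to compare against; what you have written is a self-contained derivation that the paper omits. The two identities $a+b-1=\cos(\alpha+\beta)\cos(\alpha-\beta)$ and $a-b=\sin(\alpha+\beta)\sin(\beta-\alpha)$ check out, the decomposition $|r\rangle=e^{i\phi_1}\sqrt a\,|x\rangle+\sqrt{1-a}\,|r'\rangle$ correctly yields $\sqrt b\le\cos(\alpha-\gamma)$ and hence the full triangle inequality $|\alpha-\beta|\le\gamma\le\alpha+\beta$ for the Fubini--Study angles, and the case analyses for $a+b\le 1+\sqrt c$ and $|a-b|\le\sqrt{1-c}$ are sound (in particular you correctly use $\cos(\alpha-\beta)\ge 0$ and $\sin(\alpha+\beta)\ge 0$ throughout). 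The $d=2$ argument is also right: $|\langle z'|r'\rangle|=1$ forces the reverse triangle inequality $\sqrt b\ge|\cos(\alpha+\gamma)|$, which in both sign cases gives $\alpha+\beta+\gamma\le\pi$, and that in turn gives $\cos(\alpha+\beta)\cos(\alpha-\beta)\ge-\cos\gamma$. Two small points worth a sentence each in a written-up version: (i) the decompositions degenerate when $a=1$ or $c=1$ (the vector $|r'\rangle$ or $|z'\rangle$ is undefined), but the corresponding term vanishes and the inequalities hold trivially there; (ii) your parenthetical remark that $|\langle z'|r'\rangle|$ can take any modulus in $[0,1]$ when $d\ge 3$ is exactly the right explanation for why the lower bound $1-\sqrt c\le a+b$ is special to qubits, and it would strengthen the exposition to exhibit the standard three-mutually-orthogonal-vector counterexample ($a=b=0$, $c=0$) in $d\ge 3$.
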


\subsection{The quantum uncertainty relation of geometric coherence on two general measurement bases}

Based on the incompatibility of two orthonormal bases, the quantum uncertainty relation of the geometric coherence on two general measurement bases can be derived.

\begin{theorem}\label{th2}
For any qubit state $\rho$ and two orthonormal bases $\mathbb{X}=\{|x_{i}\rangle,i=1,2\}$ and $\mathbb{Y}=\{|y_{i}\rangle,i=1,2\}$, the geometric coherence of $\rho$ satisfies the following quantum uncertainty relation:
\begin{align}
	 C_{g}^{\mathbb{X}}(\rho)+C_{g}^{\mathbb{Y}}(\rho) \geq \frac{1}{2}[1-\sqrt{1+4(2\mathcal{P}-1)(\mathfrak{c}-\sqrt{\mathfrak{c}})}],\label{DD}
	\end{align}
where $\mathcal{P}$ is the purity of the quantum state $\rho$, and $\mathfrak{c}$ is the incompatibility of $\mathbb{X}$ and $\mathbb{Y}$.
The equality holds true if and only if $\rho$ is the maximally mixed state or for nonmaximally mixed states, the three bases $\mathbb{X}$, $\mathbb{Y}$ and ${\mathbb{B}}_{\rho}$ are all compatible.
\end{theorem}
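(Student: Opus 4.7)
The plan is to reduce the inequality to a scalar optimization over the overlaps of an eigenbasis of $\rho$ with $\mathbb{X}$ and $\mathbb{Y}$, constrain those overlaps via Lemma \ref{lemma1}, and then maximize a convex objective at a vertex of the resulting polytope. First, I would dispose of the case $\rho = I/2$ (where both sides vanish) and otherwise diagonalize $\rho = \lambda_1 |\phi_1\rangle\langle\phi_1| + \lambda_2 |\phi_2\rangle\langle\phi_2|$ with $\lambda_1 > \lambda_2$, noting that for qubits $\lambda_1 - \lambda_2 = \sqrt{\alpha}$ where $\alpha = 2\mathcal{P} - 1 \in (0, 1]$. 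Setting $p = |\langle x_1|\phi_1\rangle|^2$, $q = |\langle y_1|\phi_1\rangle|^2$, $\tilde p = 2p - 1$, $\tilde q = 2q - 1$, a direct expansion gives $\langle x_1|\rho|x_1\rangle - \langle x_2|\rho|x_2\rangle = \sqrt{\alpha}\,\tilde p$, so Eq.~(\ref{AA}) becomes
\begin{equation*}
C_g^{\mathbb{X}}(\rho) = \tfrac{1}{2} - \tfrac{1}{2}\sqrt{1 - \alpha + \alpha\tilde p^{2}}, \qquad C_g^{\mathbb{Y}}(\rho) = \tfrac{1}{2} - \tfrac{1}{2}\sqrt{1 - \alpha + \alpha\tilde q^{2}}.
\end{equation*}
The claim is thus equivalent to the scalar upper bound $\sqrt{1 - \alpha + \alpha\tilde p^2} + \sqrt{1 - \alpha + \alpha\tilde q^2} \le 1 + \sqrt{1 + 4\alpha(\mathfrak{c} - \sqrt{\mathfrak{c}})}$.

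The second step is to extract constraints on $(\tilde p, \tilde q)$. Applying Lemma \ref{lemma1} to the three normalized qubit vectors $|x_1\rangle, |y_1\rangle, |\phi_1\rangle$ (with the labelling of $\mathbb{Y}$ fixed so that $|\langle x_1|y_1\rangle|^2 = \mathfrak{c}$), the three listed inequalities translate into $|\tilde p + \tilde q| \le 2\sqrt{\mathfrak{c}}$ and $|\tilde p - \tilde q| \le 2\sqrt{1 - \mathfrak{c}}$. Since $\mathfrak{c} \ge 1/2$ forces $\sqrt{1 - \mathfrak{c}} \le \sqrt{\mathfrak{c}}$, the identity $|\tilde p| + |\tilde q| = \max\{|\tilde p + \tilde q|, |\tilde p - \tilde q|\}$ collapses both into the single linear constraint $|\tilde p| + |\tilde q| \le 2\sqrt{\mathfrak{c}}$. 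Together with $|\tilde p|, |\tilde q| \in [0, 1]$, this cuts out a pentagonal polytope $P$ in the first quadrant of the $(|\tilde p|, |\tilde q|)$-plane with vertices $(0, 0), (1, 0), (0, 1), (1, 2\sqrt{\mathfrak{c}} - 1), (2\sqrt{\mathfrak{c}} - 1, 1)$.

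The last step is the maximization. Because $g(y) = \sqrt{1 - \alpha + \alpha y^2}$ satisfies $g''(y) = \alpha(1 - \alpha)/(1 - \alpha + \alpha y^2)^{3/2} \ge 0$, the objective $g(|\tilde p|) + g(|\tilde q|)$ is jointly convex on $P$ and attains its maximum at a vertex. Direct evaluation, together with the identity $(2\sqrt{\mathfrak{c}} - 1)^2 = 1 + 4(\mathfrak{c} - \sqrt{\mathfrak{c}})$, shows that the maximum is $g(1) + g(2\sqrt{\mathfrak{c}} - 1) = 1 + \sqrt{1 + 4\alpha(\mathfrak{c} - \sqrt{\mathfrak{c}})}$, which delivers the stated lower bound after rearrangement. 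The main obstacle will be spotting the reformulation in step one and correctly assembling the two bounds from Lemma \ref{lemma1} into $|\tilde p| + |\tilde q| \le 2\sqrt{\mathfrak{c}}$; once these are in hand, the vertex comparison is elementary. For equality: saturation at the top vertex forces $|\tilde p| = 1$ (so $\mathbb{B}_\rho = \mathbb{X}$ up to phase) and $|\tilde q| = 2\sqrt{\mathfrak{c}} - 1$; but $\mathbb{B}_\rho = \mathbb{X}$ gives $q = \mathfrak{c}$, i.e.\ $|\tilde q| = |2\mathfrak{c} - 1|$, and the equation $|2\mathfrak{c} - 1| = 2\sqrt{\mathfrak{c}} - 1$ on $[1/2, 1]$ admits only $\mathfrak{c} = 1$, which requires $\mathbb{X} = \mathbb{Y}$ and hence pairwise compatibility of the three bases. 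The remaining equality case is $\alpha = 0$, the maximally mixed state already handled.
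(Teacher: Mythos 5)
Your argument follows the paper's route in all essentials: diagonalize $\rho$, rewrite Eq.~(\ref{AA}) so that each coherence becomes a convex function of the overlap between $\mathbb{B}_{\rho}$ and the respective basis, constrain those overlaps by Lemma~\ref{lemma1}, and push the maximum of the convex objective to a vertex of the feasible polytope. Your computations all check out: $\lambda_1-\lambda_2=\sqrt{2\mathcal{P}-1}$, the reformulation $C_g^{\mathbb{X}}(\rho)=\frac12-\frac12\sqrt{1-\alpha+\alpha\tilde p^2}$, the translation of Lemma~\ref{lemma1} into $|\tilde p+\tilde q|\le 2\sqrt{\mathfrak{c}}$ and $|\tilde p-\tilde q|\le 2\sqrt{1-\mathfrak{c}}$, and the vertex evaluation using $(2\sqrt{\mathfrak{c}}-1)^2=1+4(\mathfrak{c}-\sqrt{\mathfrak{c}})$. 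Your one genuine streamlining is collapsing the two Lemma~\ref{lemma1} constraints into the single relaxed constraint $|\tilde p|+|\tilde q|\le 2\sqrt{\mathfrak{c}}$ via $|\tilde p|+|\tilde q|=\max\{|\tilde p+\tilde q|,|\tilde p-\tilde q|\}$ and $\mathfrak{c}\ge\frac12$; since enlarging the feasible set can only increase the maximum, this is harmless, and it replaces the paper's two separate feasible-region shapes ($\mathfrak{c}\in[0.5,0.75]$ versus $\mathfrak{c}\in(0.75,1]$ in Fig.~1) and the attendant case-by-case vertex comparisons with a single pentagon.

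The equality discussion, however, inherits the same gap as the paper's: both arguments pass from ``the maximum is attained at a vertex'' to ``saturation forces the point to be that vertex,'' which requires strict convexity. Your own formula $g''(y)=\alpha(1-\alpha)/(1-\alpha+\alpha y^2)^{3/2}$ shows the objective degenerates to an affine function when $\alpha=2\mathcal{P}-1=1$, i.e.\ for pure states, and then the maximum is attained along the entire edge $|\tilde p|+|\tilde q|=2\sqrt{\mathfrak{c}}$. Such points are realized: on the Bloch sphere take $\hat r\propto\hat x+\hat y$, giving $|\tilde p|=|\tilde q|=\sqrt{\mathfrak{c}}$. Concretely, $|\psi\rangle=\cos(\pi/8)|0\rangle+\sin(\pi/8)|1\rangle$ with $\mathbb{X}=\{|0\rangle,|1\rangle\}$ and $\mathbb{Y}=\{\frac{1}{\sqrt2}(|0\rangle\pm|1\rangle)\}$ yields $C_g^{\mathbb{X}}+C_g^{\mathbb{Y}}=1-\frac{1}{\sqrt2}$, exactly the lower bound in Eq.~(\ref{DD}), even though $\mathfrak{c}=\frac12$ and the bases are far from compatible. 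So the ``only if'' half of the stated equality condition fails for pure states; your proof of the inequality itself, and of the equality characterization in the strictly mixed, nonmaximally mixed regime $\frac12<\mathcal{P}<1$ where $g$ is strictly convex, is sound.
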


\begin{proof}
Without loss of generality, we suppose $\mathfrak{c}= |\langle x_{1}|y_{1} \rangle|^{2}$. For any given qubit state $\rho$ with spectral decomposition $\rho=p|\phi_1\rangle \langle\phi_1|+(1-p)|\phi_2\rangle \langle \phi_2|$, we denote $a=|\langle \phi_1|x_{1} \rangle|^{2}$ and $b=|\langle \phi_1|y_{1} \rangle|^{2}$, using Eq. \eqref{AA}, we have
\begin{eqnarray*}
		C_{g}^{\mathbb{X}}(\rho)+C_{g}^{\mathbb{Y}}(\rho)
		=1-\frac{1}{2}[\sqrt{1+4(2\mathcal{P}-1)(a^{2}-a)}+\sqrt{1+4(2\mathcal{P}-1)(b^{2}-b)}].
	\end{eqnarray*}
Let $f(x)=\sqrt{1+4(2\mathcal{P}-1)(x^{2}-x)}$, $x\in[0,1]$, it is easy to verify that $f(x)$ is convex and symmetric about $x=\frac{1}{2}$. Furthermore $f$ decreases when $x\leq\frac{1}{2}$ and increases when $x\geq \frac{1}{2}$.
Now we focus on the maximum of the function $g(a,b)=f(a)+f(b)$, which is symmetric about $a=\frac{1}{2}$ and $b=\frac{1}{2}$. Here we only need to consider the case $a,b\in[0,\frac{1}{2}]$.
Hence we get the following optimization problem:
\begin{equation}\label{opt1}
\begin{aligned}
&\max \quad g(a,b)=f(a)+f(b)\\
&\begin{array}{r@{\quad}r@{}l@{\quad}l}
s.t. &&a+b\in[1-\sqrt{\mathfrak{c}},1+\sqrt{\mathfrak{c}}]\\
     &&b-a\in[-\sqrt{1-\mathfrak{c}},\sqrt{1-\mathfrak{c}}]\\
     &&a,b\in[0,\frac{1}{2}]\\
\end{array}
\end{aligned}
   \end{equation}
by Lemma 1.
The feasible domain of the optimization problem is depicted in Fig. 1.
\begin{figure}[h]\scriptsize 	
\begin{center}
		\subfigure[]
		{
				\includegraphics[width=0.6\textwidth]{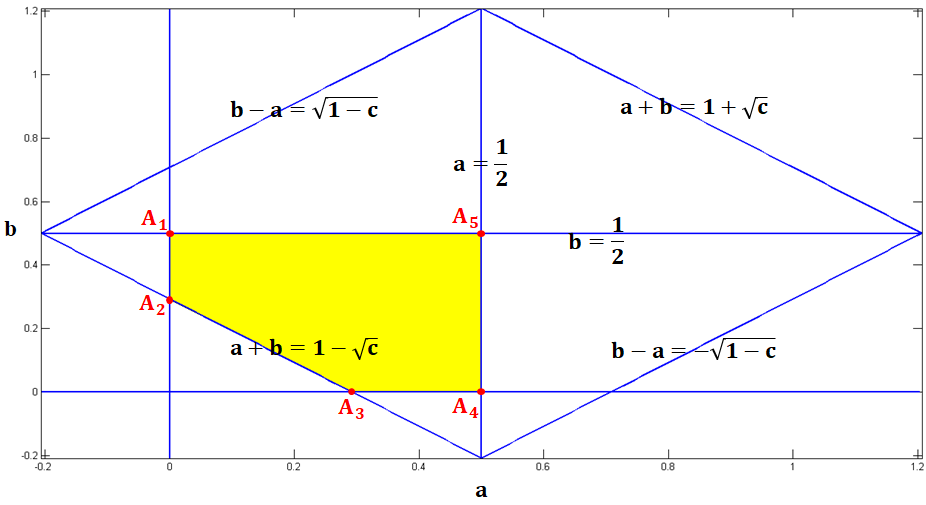}
		}
		\subfigure[]
		{
				\includegraphics[width=0.6\textwidth]{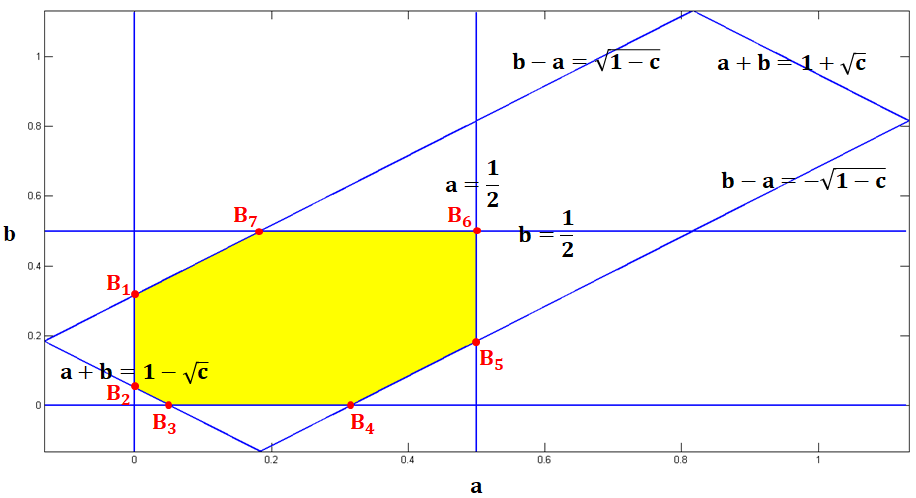}
		}
        \caption{The yellow regions represent the feasible domain of the optimization problem in Eq. \eqref{opt1}. When $\mathfrak{c}$ belongs to $[0.5,0.75]$, the feasible domain is in shape of the subfigure (a) with $A_{1}(0,\frac{1}{2})$, $A_{2}(0,1-\sqrt{\mathfrak{c}})$, $A_{3}(1-\sqrt{\mathfrak{c}},0)$, $A_{4}(\frac{1}{2},0)$ and $A_{5}(\frac{1}{2},\frac{1}{2})$ as the extreme points. When $\mathfrak{c}$ belongs to $(0.75,1]$, the feasible domain is in shape of the subfigure (b) with $B_{1}(0,\sqrt{1-\mathfrak{c}})$, $B_{2}(0,1-\sqrt{\mathfrak{c}})$, $B_{3}(1-\sqrt{\mathfrak{c}},0)$, $B_{4}(\sqrt{1-\mathfrak{c}},0)$, $B_{5}(\frac{1}{2},\frac{1}{2}-\sqrt{1-\mathfrak{c}})$, $B_{6}(\frac{1}{2},\frac{1}{2})$ and $B_{7}(\frac{1}{2}-\sqrt{1-\mathfrak{c}},\frac{1}{2})$ as the extreme points. In the subfigure (a), we choose $\mathfrak{c}=0.5$ as an example. In the subfigure (b), we choose $\mathfrak{c}=0.9$ as an example.
        }
\end{center}
\end{figure}
Because $g$ is a convex function of $a,b$ and the feasible domain is a convex polygon, the maximum of function $g$ must be obtained at the extreme points of the feasible domain.

When $\mathfrak{c}$ belongs to $[0.5,0.75]$, the extreme points of the feasible domain are
$A_{1}(0,\frac{1}{2})$, $A_{2}(0,1-\sqrt{\mathfrak{c}})$, $A_{3}(1-\sqrt{\mathfrak{c}},0)$, $A_{4}(\frac{1}{2},0)$ and $A_{5}(\frac{1}{2},\frac{1}{2})$
in the subfigure (a) of Fig. 1. 
Therefore we have
\begin{eqnarray*}
\max g(a,b)=\max\{f(0)+f(1-\sqrt{\mathfrak{c}}),f(0)+f(\frac{1}{2}),f(\frac{1}{2})+f(\frac{1}{2})\}.
\end{eqnarray*}
Since $0\leq1-\sqrt{\mathfrak{c}}<\frac{1}{2}$, so the maximum of $g$ is definitely $f(0)+f(1-\sqrt{\mathfrak{c}})$.

When $\mathfrak{c}$ belongs to $(0.75,1]$, the extreme points of the feasible domain are $B_{1}(0,\sqrt{1-\mathfrak{c}})$, $B_{2}(0,1-\sqrt{\mathfrak{c}})$, $B_{3}(1-\sqrt{\mathfrak{c}},0)$, $B_{4}(\sqrt{1-\mathfrak{c}},0)$, $B_{5}(\frac{1}{2},\frac{1}{2}-\sqrt{1-\mathfrak{c}})$, $B_{6}(\frac{1}{2},\frac{1}{2})$ and $B_{7}(\frac{1}{2}-\sqrt{1-\mathfrak{c}},\frac{1}{2})$
in the subfigure (b) of Fig. 1.
Therefore we have
\begin{eqnarray*}
\max g(a,b)=\max\{f(0)+f(1-\sqrt{\mathfrak{c}}),f(0)+f(\sqrt{1-\mathfrak{c}}),f(\frac{1}{2})+f(\frac{1}{2}-\sqrt{1-\mathfrak{c}}),f(\frac{1}{2})+f(\frac{1}{2})\}.
\end{eqnarray*}
Since $|1-\sqrt{\mathfrak{c}}-\frac{1}{2}|\geq|\sqrt{1-\mathfrak{c}}-\frac{1}{2}|$, so the maximum of $g$ is definitely $f(0)+f(1-\sqrt{\mathfrak{\mathfrak{c}}})$.

Combining the two cases above, we derive that the geometric coherence of $\rho$ on two general measurement bases $\mathbb{X}$ and $\mathbb{Y}$ should satisfy
\begin{eqnarray*}
	C_{g}^{\mathbb{X}}(\rho)+C_{g}^{\mathbb{Y}}(\rho)&=&1-\frac{1}{2}[f(a)+f(b)]\\
    &\geq&1-\frac{1}{2}[f(0)+f(1-\sqrt{\mathfrak{c}})]\\
    &=&\frac{1}{2}[1-\sqrt{1+4(2\mathcal{P}-1)(\mathfrak{c}-\sqrt{\mathfrak{c}})}].
	\end{eqnarray*}

Now we consider the saturation of the quantum uncertainty relation. First, if $\mathcal{P}=\frac{1}{2}$, then the lower bound of the geometric coherence $C_{g}^{\mathbb{X}}(\rho)+C_{g}^{\mathbb{Y}}(\rho)$ as well as the geometric coherence $C_{g}^{\mathbb{X}}(\rho)$ and $C_{g}^{\mathbb{Y}}(\rho)$ vanishes. So Eq. \eqref{DD} becomes an equality for the maximally mixed state. For nonmaximally mixed state, by the analysis above we know the geometric coherence $C_{g}^{\mathbb{X}}(\rho)+C_{g}^{\mathbb{Y}}(\rho)$ reaches its lower bound when (1) $a=1$, $b=\sqrt{\mathfrak{c}}$; (2) $a=0$, $b=1-\sqrt{\mathfrak{c}}$; (3) $a=1-\sqrt{\mathfrak{c}}$, $b=0$; (4) $a=\sqrt{\mathfrak{c}}$, $b=1$. Thanks to the symmetry of function $f$, we only need to consider the case $a=1$, $b=\sqrt{\mathfrak{c}}$. The condition $a=1$ implies  $\{|\phi_1\rangle, |\phi_2\rangle\}=\{e^{i\alpha_{1}}|x_{1}\rangle, e^{i\alpha_{2}}|x_{2}\rangle\}$. The condition $b=\sqrt{\mathfrak{c}}$ associated with $\mathfrak{c}=|\langle x_{1}|y_{1} \rangle|^{2}=|\langle \phi_1|y_{1} \rangle|^{2}=\sqrt{\mathfrak{c}}$ involves $\mathfrak{c}=1$,
which means $\{|y_{1}\rangle, |y_{2}\rangle\}=\{e^{i\beta_{1}}|x_{1}\rangle, e^{i\beta_{2}}|x_{2}\rangle\}$.
Therefore three bases $\mathbb{X}$, $\mathbb{Y}$ and ${\mathbb{B}}_{\rho}$ are all compatible. This completes the proof.
\end{proof}

Theorem \ref{th2} provides a quantum uncertainty relation for the geometric coherence with two measurement bases, in terms of the purity of quantum states and the incompatibility of the two bases. The equality in \eqref{DD} holds true only for maximally mixed states or three compatible bases $\mathbb{X}$, $\mathbb{Y}$ and ${\mathbb{B}}_{\rho}$ for nonmaximally mixed states. This is equivalent to that the quantum state $\rho$ is incoherent simultaneously under the bases $\mathbb{X}$ and $\mathbb{Y}$. Conversely, for any nonmaximally mixed states and two incompatible measurement bases $\mathbb{X}$ and  $\mathbb{Y}$, the lower bound in Eq. \eqref{DD} is always strictly positive. Particularly,
if the purity is specified to the maximum, $\mathcal{P}=1$, the quantum uncertainty relation is just relied on the incompatibility of the two bases. This demonstrates the quantum uncertainty relation is state-independent for
any pure state $|\psi\rangle$, that is,
\begin{align}
	 C_{g}^{\mathbb{X}}(|\psi\rangle)+C_{g}^{\mathbb{Y}}(|\psi\rangle) \geq \frac{1}{2}[1-\sqrt{1+4(\mathfrak{c}-\sqrt{\mathfrak{c}})}].
	\end{align}

Now we consider an example to illustrate the evaluation of the geometric coherence.
\begin{example}\label{example2}
For the maximally coherent mixed states in Eq. (\ref{exstate1}) in qubit systems, $C_{g}^{\mathbb{X}}(\rho_{m})+C_{g}^{\mathbb{Y}}(\rho_{m})$ can be evaluated from above by Eq. \eqref{upp} and from below by Eq. \eqref{DD} as
\begin{align}
	 \frac{1}{2}[1-\sqrt{1+4q^{2}(\mathfrak{c}-\sqrt{\mathfrak{c}})}] \leq C_{g}^{\mathbb{X}}(\rho_{m})+C_{g}^{\mathbb{Y}}(\rho_{m}) \leq 1-\sqrt{1-q^{2}}.\label{ex2eq1}
	\end{align}
From this equation we can see the upper bound is independent with the orthonormal bases while the lower bound increases when the incompatibility $\mathfrak{c}$ decreases.

First we fix $\mathbb{X}=\mathbb{B}_{\rho_{m}}=\{\frac{1}{\sqrt{2}}(|0\rangle+|1\rangle),\frac{1}{\sqrt{2}}(|0\rangle-|1\rangle)\}$ composed by the eigenvectors of $\rho_{m}$ and $\mathbb{Y}=\{\frac{1}{\sqrt{5}}(|0\rangle+2|1\rangle),\frac{1}{\sqrt{5}}(-2|0\rangle+|1\rangle)\}$ with the incompatibility $\mathfrak{c}=\frac{9}{10}$. Then $C_{g}^{\mathbb{X}}(\rho_{m})+C_{g}^{\mathbb{Y}}(\rho_{m})$ can be evaluated by Eq. \eqref{ex2eq1} as
\begin{align}
	 \frac{1}{2}[1-\sqrt{1+\frac{6(3-\sqrt{10})q^{2}}{5}}] \leq C_{g}^{\mathbb{X}}(\rho_{m})+C_{g}^{\mathbb{Y}}(\rho_{m}) \leq 1-\sqrt{1-q^{2}}.\label{EE}
	\end{align}
In fact by Eq. \eqref{AA} we have
\begin{eqnarray}\label{ex21}
C_{g}^{\mathbb{X}}(\rho_{m})+C_{g}^{\mathbb{Y}}(\rho_{m}) =C_{g}^{\mathbb{Y}}(\rho_{m})=\frac{1}{2}(1-\sqrt{1-\frac{9}{25}q^{2}}).
\end{eqnarray}
The geometric coherence $C_{g}^{\mathbb{X}}(\rho_{m})+C_{g}^{\mathbb{Y}}(\rho_{m})$ and the upper and lower bounds in Eq. \eqref{EE} are plotted in the first subfigure in Fig. 2.

Second, we consider $\mathbb{X}=\{\frac{1}{\sqrt{2}}(|0\rangle+i|1\rangle),\frac{1}{\sqrt{2}}(|0\rangle-i|1\rangle)\}$ and keep the basis $\mathbb{Y}=\{\frac{1}{\sqrt{5}}(|0\rangle+2|1\rangle),\frac{1}{\sqrt{5}}(-2|0\rangle+|1\rangle)\}$ with the incompatibility $\mathfrak{c}=\frac{1}{2}$.
Then $C_{g}^{\mathbb{X}}(\rho_{m})+C_{g}^{\mathbb{Y}}(\rho_{m})$ can be evaluated by Eq. \eqref{ex2eq1} as
\begin{align}
	 \frac{1}{2}[1-\sqrt{1+2(1-\sqrt{2})q^{2}}] \leq C_{g}^{\mathbb{X}}(\rho_{m})+C_{g}^{\mathbb{Y}}(\rho_{m}) \leq 1-\sqrt{1-q^{2}},\label{FF}
	\end{align}
In this case, by Eq. \eqref{AA} we have
\begin{eqnarray}\label{ex22}
C_{g}^{\mathbb{X}}(\rho_{m})+C_{g}^{\mathbb{Y}}(\rho_{m}) =\frac{1}{2}(1-\sqrt{1-q^{2}})+\frac{1}{2}(1-\sqrt{1-\frac{9}{25}q^{2}}).
\end{eqnarray}
The geometric coherence $C_{g}^{\mathbb{X}}(\rho_{m})+C_{g}^{\mathbb{Y}}(\rho_{m})$ and the upper and lower bounds in Eq. \eqref{FF} are plotted in the second subfigure in Fig. 2.

Since the incompatibility of the first set of orthonormal bases is greater than that of the second, so the lower bound of the geometric coherence under the second set of orthonormal bases is greater. This makes the region between the upper bound and lower bound in subfigure (b) in Fig. 2 smaller. So the evaluation of the geometric coherence $C_{g}^{\mathbb{X}}(\rho_{m})+C_{g}^{\mathbb{Y}}(\rho_{m})$ is more precise with smaller incompatibility.


\begin{figure}[htbp]
\centering

\subfigure[]
{
\begin{minipage}[t]{0.45\textwidth}
\includegraphics[width=0.9\textwidth,height=0.7\textwidth]{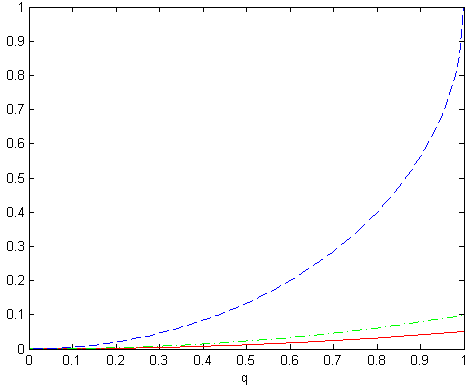}
\end{minipage}
}
\quad
\subfigure[]
{
\begin{minipage}[t]{0.45\textwidth}
\includegraphics[width=0.9\textwidth,height=0.7\textwidth]{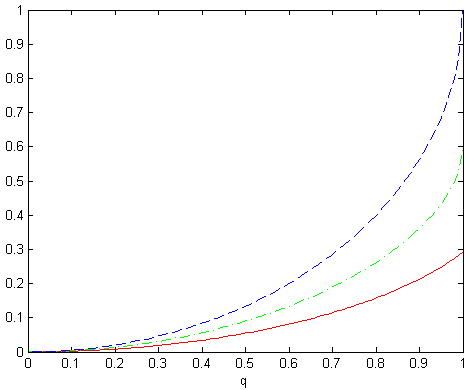}
\end{minipage}
}
\caption{The evaluation of the geometric coherence $ C_{g}^{\mathbb{X}}(\rho_m)+C_{g}^{\mathbb{Y}}(\rho_m)$. In the subfigure (a), we choose $\mathbb{X}=\{\frac{1}{\sqrt{2}}(|0\rangle+|1\rangle),\frac{1}{\sqrt{2}}(|0\rangle-|1\rangle)\}$ and $\mathbb{Y}=\{\frac{1}{\sqrt{5}}(|0\rangle+2|1\rangle),\frac{1}{\sqrt{5}}(-2|0\rangle+|1\rangle)\}$.
The green dot-dashed line is $C_{g}^{\mathbb{X}}(\rho_{m})+C_{g}^{\mathbb{Y}}(\rho_{m})$ in Eq. (\ref{ex21}).
The red solid line and blue dashed line are the lower and upper bounds in Eq. \eqref{EE}.  In the subfigure (b), we choose  $\mathbb{X}=\{\frac{1}{\sqrt{2}}(|0\rangle+i|1\rangle),\frac{1}{\sqrt{2}}(|0\rangle-i|1\rangle)\}$ and $\mathbb{Y}=\{\frac{1}{\sqrt{5}}(|0\rangle+2|1\rangle),\frac{1}{\sqrt{5}}(-2|0\rangle+|1\rangle)\}$. The green dot-dashed line is $C_{g}^{\mathbb{X}}(\rho_{m})+C_{g}^{\mathbb{Y}}(\rho_{m})$ in Eq. (\ref{ex22}).
The red solid line and blue dashed line are the lower and upper bounds in Eq. \eqref{FF}.}
\end{figure}
\end{example}

\subsection{The quantum uncertainty relation of geometric coherence with three measurement bases}

For any three orthonormal bases $\mathbb{X}=\{|x_{i}\rangle,i=1,2\}$, $\mathbb{Y}=\{|y_{i}\rangle,i=1,2\}$ and $\mathbb{Z}=\{|z_{i}\rangle,i=1,2\}$,
we define the incompatibility vector of $\mathbb{X}$, $\mathbb{Y}$, $\mathbb{Z}$ as $\vec{\mathfrak{c}}=({\mathfrak{c}}_1, \mathfrak{c}_2, \mathfrak{c}_3)$,
\begin{eqnarray}
\vec{\mathfrak{c}}=(\mathfrak{c}_1, \mathfrak{c}_2, \mathfrak{c}_3):=(\max_{|x_{i}\rangle\in\mathbb{X},|y_{j}\rangle\in\mathbb{Y}}|\langle x_{i}|y_{j}\rangle |^{2}, \max_{|y_{i}\rangle\in\mathbb{Y},|z_{j}\rangle\in\mathbb{Z}}|\langle y_{i}|z_{j}\rangle |^{2},\max_{|x_{i}\rangle\in\mathbb{X},|z_{j}\rangle\in\mathbb{Z}}|\langle x_{i}|z_{j}\rangle |^{2})^{\downarrow},
\end{eqnarray}
where the superscript $^{\downarrow}$ means rearranging the entries in descending order.
The incompatibility vector $\vec{\mathfrak{c}}=(\mathfrak{c}_1, \mathfrak{c}_2, \mathfrak{c}_3)$ describes the commutativity of three orthnormal bases. It satisfies $\frac{1}{2} \leq \mathfrak{c}_i \leq 1$ for $i=1,2,3$. $\vec{\mathfrak{c}}=(1, 1, 1)$ if and only if three orthonormal bases are all compatible.
Based on the incompatibility vector $\vec{\mathfrak{c}}$, we derive a quantum uncertainty relation of geometric coherence under three measurement bases.

\begin{theorem}\label{th3}
For any qubit state $\rho$
and three orthonormal bases $\mathbb{X}=\{|x_{i}\rangle,i=1,2\}$, $\mathbb{Y}=\{|y_{i}\rangle,i=1,2\}$ and $\mathbb{Z}=\{|z_{i}\rangle,i=1,2\}$, suppose $\vec{\mathfrak{c}}=(\mathfrak{c}_1, \mathfrak{c}_2, \mathfrak{c}_3)$ is the incompatibility vector of $\mathbb{X}$, $\mathbb{Y}$ and $\mathbb{Z}$.
\begin{enumerate}[(1)]
\item
If $1+\sqrt{\mathfrak{c}_3}<\sqrt{\mathfrak{c}_{1}}+ \sqrt{\mathfrak{c}_2}$, then
\begin{align}
	 C_{g}^{\mathbb{X}}(\rho)+C_{g}^{\mathbb{Y}}(\rho)+C_{g}^{\mathbb{Z}}(\rho) \geq 1-\frac{1}{2}[f(\sqrt{\mathfrak{c}_{1}})+f(\sqrt{\mathfrak{c}_{1}}-\sqrt{\mathfrak{c}_3})].\label{eq1 theorem3}
	\end{align}
\item
If $1+\sqrt{\mathfrak{c}_3} \geq \sqrt{\mathfrak{c}_{1}} + \sqrt{\mathfrak{c}_2}$, then
\begin{eqnarray*}
&&C_{g}^{\mathbb{X}}(\rho)+C_{g}^{\mathbb{Y}}(\rho)+C_{g}^{\mathbb{Z}}(\rho) \geq \frac{3}{2}-\frac{1}{2}\max\{1+f(\sqrt{\mathfrak{c}_{1}})+f(\sqrt{\mathfrak{c}_2}),\\
&&f(\frac{1-\sqrt{\mathfrak{c}_{1}}-\sqrt{\mathfrak{c}_{2}}+\sqrt{\mathfrak{c}_{3}}}{2})+f(\frac{1-\sqrt{\mathfrak{c}_{1}}+\sqrt{\mathfrak{c}_{2}}-
\sqrt{\mathfrak{c}_{3}}}{2})+f(\frac{1+\sqrt{\mathfrak{c}_{1}}-\sqrt{\mathfrak{c}_{2}}-\sqrt{\mathfrak{c}_{3}}}{2})\},
	\end{eqnarray*}
\end{enumerate}
with $f(x)=\sqrt{1+ 4(2\mathcal{P}-1)(x^2-x)}$ and $\mathcal{P}$ the purity of the quantum state $\rho$.
\end{theorem}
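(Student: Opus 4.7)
The plan is to extend the strategy used in the proof of Theorem \ref{th2} by introducing a third variable and carrying out the resulting three-dimensional extreme-point analysis. Writing the spectral decomposition $\rho = p|\phi_1\rangle\langle\phi_1| + (1-p)|\phi_2\rangle\langle\phi_2|$ and relabeling the bases (WLOG, using the symmetry $f(x)=f(1-x)$ to absorb any sign flip forced by the qubit geometry) so that $\mathfrak{c}_1 = |\langle x_1|y_1\rangle|^2$, $\mathfrak{c}_2 = |\langle y_1|z_1\rangle|^2$, $\mathfrak{c}_3 = |\langle x_1|z_1\rangle|^2$, I would set $a = |\langle\phi_1|x_1\rangle|^2$, $b = |\langle\phi_1|y_1\rangle|^2$, $c = |\langle\phi_1|z_1\rangle|^2$. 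Equation (\ref{AA}) then gives
\[ C_{g}^{\mathbb{X}}(\rho)+C_{g}^{\mathbb{Y}}(\rho)+C_{g}^{\mathbb{Z}}(\rho) \;=\; \frac{3}{2} - \frac{1}{2}\bigl[f(a)+f(b)+f(c)\bigr],\]
so the task reduces to upper-bounding $g(a,b,c):=f(a)+f(b)+f(c)$ over the feasible triples.

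Applying Lemma \ref{lemma1} to each of the three triples $\{|\phi_1\rangle,|x_1\rangle,|y_1\rangle\}$, $\{|\phi_1\rangle,|y_1\rangle,|z_1\rangle\}$, $\{|\phi_1\rangle,|x_1\rangle,|z_1\rangle\}$ produces six linear inequalities: $|a-b|\leq\sqrt{1-\mathfrak{c}_1}$ and $1-\sqrt{\mathfrak{c}_1}\leq a+b\leq 1+\sqrt{\mathfrak{c}_1}$, with the analogous pairs for $(b,c)$ using $\mathfrak{c}_2$ and for $(a,c)$ using $\mathfrak{c}_3$. These cut out a convex polytope $\mathcal{F}\subset[0,1]^3$. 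Since $f$ is convex on $[0,1]$ (established in the proof of Theorem \ref{th2}), the function $g$ is convex on $\mathcal{F}$, so $\max_{\mathcal{F}} g$ is attained at a vertex. Both $g$ and $\mathcal{F}$ are invariant under the global flip $(a,b,c)\to(1-a,1-b,1-c)$, which roughly halves the vertex enumeration.

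The dichotomy in the theorem comes from comparing $1+\sqrt{\mathfrak{c}_3}$ with $\sqrt{\mathfrak{c}_1}+\sqrt{\mathfrak{c}_2}$. In case (1), the weaker $(a,c)$-bound from the smallest incompatibility $\mathfrak{c}_3$ clips the two-basis hexagon on the $(b,\cdot)$ faces, and the vertex maximizing $g$ is of axis-adjacent type, e.g.\ $(0,\sqrt{\mathfrak{c}_1},\sqrt{\mathfrak{c}_1}-\sqrt{\mathfrak{c}_3})$; evaluating $g$ there, together with $f(0)=1$, produces the stated $1+f(\sqrt{\mathfrak{c}_1})+f(\sqrt{\mathfrak{c}_1}-\sqrt{\mathfrak{c}_3})$. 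In case (2), the $(a,c)$-constraint becomes slack enough that the triple intersection of $a\pm b$, $b\pm c$, $a\pm c$ hyperplanes contributes interior-type vertices with coordinates $\bigl(\tfrac{1\pm\sqrt{\mathfrak{c}_1}\pm\sqrt{\mathfrak{c}_2}\pm\sqrt{\mathfrak{c}_3}}{2}\bigr)$ (signs arranged so the three pairwise constraints are simultaneously saturated); the outer $\max\{\cdot,\cdot\}$ in the bound appears because one must compare $g$ at these internal vertices against the axis-type vertex $(0,\sqrt{\mathfrak{c}_1},\sqrt{\mathfrak{c}_2})$, and neither always dominates. Monotonicity of $f$ away from $1/2$ is invoked to discard dominated vertices inside each regime.

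The main obstacle is the three-dimensional vertex enumeration itself. Unlike the hexagonal region in Theorem \ref{th2}, the polytope $\mathcal{F}$ can carry up to a dozen vertices whose coordinates and combinatorial types depend on the relative sizes of the three $\sqrt{\mathfrak{c}_i}$. One must identify which of the six half-space constraints are active at each candidate, verify that the listed vertices actually lie in $\mathcal{F}$ precisely under the case hypothesis in force, and then show that no other vertex yields a larger value of $g$. The boundary $1+\sqrt{\mathfrak{c}_3}=\sqrt{\mathfrak{c}_1}+\sqrt{\mathfrak{c}_2}$ is where the interior-type vertex coalesces with the axis-type vertex, providing a continuity check between the two formulas.
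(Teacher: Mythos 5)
Your proposal follows essentially the same route as the paper's proof: the same reduction via Eq.~(\ref{AA}) and the spectral decomposition of $\rho$, the same Lemma~\ref{lemma1} constraints cutting out a convex polytope, the same convexity-of-$f$/vertex-enumeration argument, and the same case split at $1+\sqrt{\mathfrak{c}_3}$ versus $\sqrt{\mathfrak{c}_1}+\sqrt{\mathfrak{c}_2}$ with the same maximizing vertices (up to the reflection $x\mapsto 1-x$ under which $f$ is invariant). The one point to tighten is your ``WLOG'' relabeling: in a qubit one cannot always arrange all three maximal overlaps to be the $(1,1)$-overlaps simultaneously (fixing the labels of $\mathbb{Y}$ and $\mathbb{Z}$ relative to $\mathbb{X}$ can force the $\mathbb{Y}$--$\mathbb{Z}$ maximum onto the anti-diagonal pairing), and flipping a single coordinate is not a symmetry of the polytope since that coordinate enters two sum-constraints; the paper resolves this by relaxing the third sum-constraint to use $\mathfrak{c}_3$ and dropping the corresponding difference constraint, which only enlarges the feasible region and therefore still yields the stated lower bound.
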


\begin{proof}
Since $\vec{\mathfrak{c}}=(\mathfrak{c}_1, \mathfrak{c}_2, \mathfrak{c}_3)$ is the incompatibility vector of three orthonormal bases $\mathbb{X}$, $\mathbb{Y}$ and $\mathbb{Z}$, without loss of generality, we
suppose $\mathfrak{c}_1=\max_{|x_{i}\rangle\in\mathbb{X},|y_{j}\rangle\in\mathbb{Y}}|\langle x_{i}|y_{j}\rangle |^{2}=|\langle x_{1}|y_{1}\rangle |^{2}$,  $\mathfrak{c}_2=\max_{|x_{i}\rangle\in\mathbb{X},|z_{j}\rangle\in\mathbb{Z}}|\langle x_{i}|z_{j}\rangle |^{2}=|\langle x_{1}|z_{1}\rangle |^{2}$, and $\mathfrak{c}_3=\max_{|y_{i}\rangle\in\mathbb{Y},|z_{j}\rangle\in\mathbb{Z}}|\langle y_{i}|z_{j}\rangle |^{2}$, and $\mathfrak{c}_1 \geq \mathfrak{c}_2 \geq \mathfrak{c}_3$.
Suppose $\rho=p|\phi_1\rangle \langle\phi_1|+(1-p)|\phi_2\rangle \langle\phi_2|$ is the spectral decomposition of $\rho$, denote $a=|\langle \phi_1|x_{1} \rangle|^{2}$, $b=|\langle \phi_1|y_{1} \rangle|^{2}$ and $n=|\langle \phi_1|z_{1} \rangle|^{2}$, then using Eq. \eqref{AA}, we have
\begin{eqnarray*}
		&&C_{g}^{\mathbb{X}}(\rho)+C_{g}^{\mathbb{Y}}(\rho)+C_{g}^{\mathbb{Z}}(\rho)\\
		&=&\frac{3}{2}-\frac{1}{2}[\sqrt{1+4(2\mathcal{P}-1)(a^{2}-a)}+\sqrt{1+4(2\mathcal{P}-1)(b^{2}-b)}+\sqrt{1+4(2\mathcal{P}-1)(n^{2}-n)}]\\
        &=&\frac{3}{2}-\frac{1}{2}[f(a)+f(b)+f(n)].
	\end{eqnarray*}
Now we consider the maximum of $G(a,b,n)=f(a)+f(b)+f(n)$.
By the symmetry of the function $G$ and Lemma 1, we only need to consider
the following optimization problem:
\begin{equation}\label{opt2}
\begin{aligned}
&\max \quad G(a,b,n)=f(a)+f(b)+f(n)\\
&\begin{array}{r@{\quad}r@{}l@{\quad}l}
s.t. &&a+b\in[1-\sqrt{\mathfrak{c_{1}}},1+\sqrt{\mathfrak{c_{1}}}]\\
     &&a+n\in[1-\sqrt{\mathfrak{c_{2}}},1+\sqrt{\mathfrak{c_{2}}}]\\
     &&b+n\in[1-\sqrt{\mathfrak{c_{3}}},1+\sqrt{\mathfrak{c_{3}}}]\\
     &&b-a\in[-\sqrt{1-\mathfrak{c_{1}}},\sqrt{1-\mathfrak{c_{1}}}]\\
     &&n-a\in[-\sqrt{1-\mathfrak{c_{2}}},\sqrt{1-\mathfrak{c_{2}}}]\\
     &&a,b,n\in[0,\frac{1}{2}],
\end{array}
\end{aligned}
   \end{equation}
where we have extended the range of $b+n$ from $[1-\sqrt{|\langle y_{1}|z_{1} \rangle|^{2}},1+\sqrt{|\langle y_{1}|z_{1} \rangle|^{2}}]$ to $[1-\sqrt{\mathfrak{c_{3}}},1+\sqrt{\mathfrak{c_{3}}}]$ and remove the restriction of $n-b$ for simplicity.

First if $\mathfrak{c}_{1}\in[0.5, 0.75]$, then we divide the discussion of the optimization problem in Eq. (\ref{opt2}) into two cases.

(1) When $1+\sqrt{\mathfrak{c_3}}< \sqrt{\mathfrak{c_{1}}} + \sqrt{\mathfrak{c_2}} $, the feasible domain is a convex polyhedron in the first subfigure in Fig. 3. Since $G(a,b,n)$ is a convex function of $a,b,n$, its maximum over this feasible domain must be achieved at the extreme points, which are the
vertices of the polyhedron $C_{1}(1-\sqrt{\mathfrak{c_{1}}},0,1-\sqrt{\mathfrak{c_{3}}})$, $C_{2}(0,1-\sqrt{\mathfrak{c_{1}}},\sqrt{\mathfrak{c_{1}}}-\sqrt{\mathfrak{c_{3}}})$, $C_{5}(0,\sqrt{\mathfrak{c_{2}}}-\sqrt{\mathfrak{c_{3}}},1-\sqrt{\mathfrak{c_{2}}})$, $C_{6}(1-\sqrt{\mathfrak{c_{2}}},1-\sqrt{\mathfrak{c_{3}}},0)$ and the vertices
\begin{equation}\label{point d}
\begin{array}{rcl}
&&D_{1}(1-\sqrt{\mathfrak{c_{1}}},0,\frac{1}{2}),\ \  D_{2}(0,1-\sqrt{\mathfrak{c_{1}}},\frac{1}{2}),\ \  D_{3}(0,\frac{1}{2},\frac{1}{2}), \ \ D_{4}(\frac{1}{2},\frac{1}{2},\frac{1}{2}),\\
&&D_{5}(\frac{1}{2},0,\frac{1}{2}), \ \ D_{6}(0,\frac{1}{2},1-\sqrt{\mathfrak{c_{2}}}), \ \ D_{7}(1-\sqrt{\mathfrak{c_{2}}},\frac{1}{2},0),\ \ D_{8}(\frac{1}{2},\frac{1}{2},0),\\
&&D_{9}(\frac{1}{2},1-\sqrt{\mathfrak{c_{3}}},0),\ \  D_{10}(\frac{1}{2},0,1-\sqrt{\mathfrak{c_{3}}}).
\end{array}
\end{equation}
By calculating and comparing these values, we get $\max G(a,b,n)=\max\{G(C_{2}),G(C_{5})\}$. Let $H(x)=f(0)+f(x)+f(1-\sqrt{\mathfrak{c_{3}}}-x)$, $x\in[0,1-\sqrt{\mathfrak{c_{3}}}]$, one can verify that $H(x)$ is a convex function and symmetric about $x=\frac{1-\sqrt{\mathfrak{c_{3}}}}{2}$. Since $|1-\sqrt{\mathfrak{c_{1}}}-\frac{1-\sqrt{\mathfrak{c_{3}}}}{2}|\geq |\sqrt{\mathfrak{c_{2}}}-\sqrt{\mathfrak{c_{3}}}-\frac{1-\sqrt{\mathfrak{c_{3}}}}{2}|$, the value of $C_{5}$ is no more than the value of  $C_{2}$, therefore the maximum of function $G$ is definitely
\begin{eqnarray}
\max G(a,b,n)=G(0,1-\sqrt{\mathfrak{c}_{1}},\sqrt{\mathfrak{c}_{1}}-\sqrt{\mathfrak{c}_3}).
\end{eqnarray}

(2) When $1+\sqrt{\mathfrak{c}_3}\geq \sqrt{\mathfrak{c}_{1}} + \sqrt{\mathfrak{c}_2} $, the feasible domain is a convex polyhedron in the second subfigure in Fig. 3. Similarly, to get the maximum of function $G$ in this polyhedron, we need to calculate the values of
the vertices $C_{1}(1-\sqrt{\mathfrak{c_{1}}},0,1-\sqrt{\mathfrak{c_{3}}})$, $C_{3}(0,1-\sqrt{\mathfrak{c_{1}}},1-\sqrt{\mathfrak{c_{2}}})$, $C_{6}(1-\sqrt{\mathfrak{c_{2}}},1-\sqrt{\mathfrak{c_{3}}},0)$, $C_{7}(\frac{1-\sqrt{\mathfrak{c_{1}}}-\sqrt{\mathfrak{c_{2}}}+\sqrt{\mathfrak{c_{3}}}}{2},\frac{1-\sqrt{\mathfrak{c_{1}}}+\sqrt{\mathfrak{c_{2}}}-\sqrt{\mathfrak{c_{3}}}}{2},
\frac{1+\sqrt{\mathfrak{c_{1}}}-\sqrt{\mathfrak{c_{2}}}-\sqrt{\mathfrak{c_{3}}}}{2})$ and the vertices $\{D_{i},i=1,...,10\}$ in Eq. (\ref{point d}). By comparing the values, we get
\begin{eqnarray}
\max G(a,b,n)=\max \{G(0,1-\sqrt{\mathfrak{c_{1}}},1-\sqrt{\mathfrak{c_2}}),G(\frac{1-\sqrt{\mathfrak{c_{1}}}-\sqrt{\mathfrak{c_{2}}}+\sqrt{\mathfrak{c_{3}}}}{2}, \nonumber\\ \frac{1-\sqrt{\mathfrak{c_{1}}}+\sqrt{\mathfrak{c_{2}}}-\sqrt{\mathfrak{c_{3}}}}{2}, \frac{1+\sqrt{\mathfrak{c_{1}}}-\sqrt{\mathfrak{c_{2}}}-\sqrt{\mathfrak{c_{3}}}}{2})\}.
\end{eqnarray}

\begin{figure}
\begin{center}
        \subfigure[]
		{
				\includegraphics[width=0.49\textwidth]{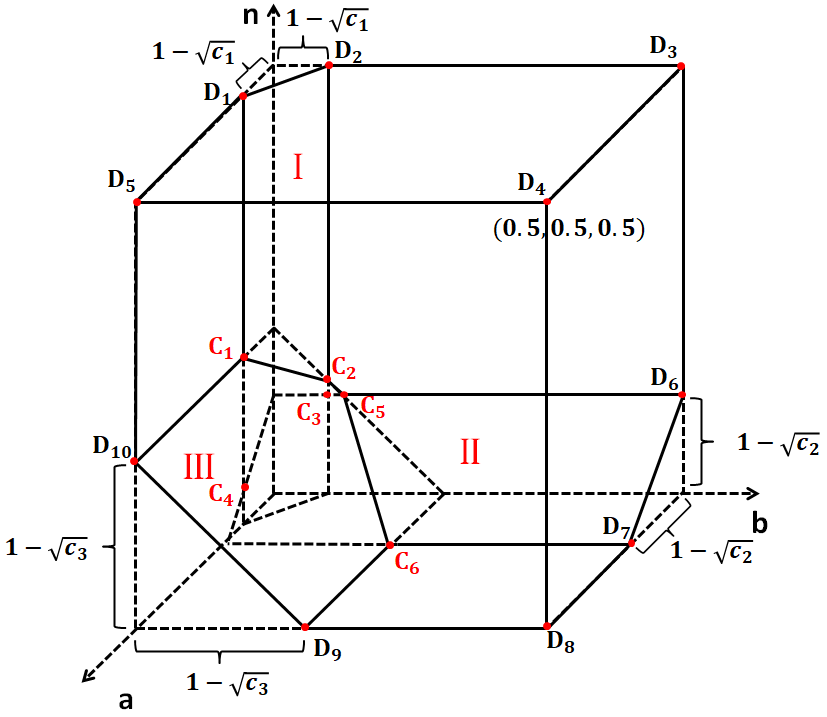}
		}
		\subfigure[]
		{
				\includegraphics[width=0.49\textwidth]{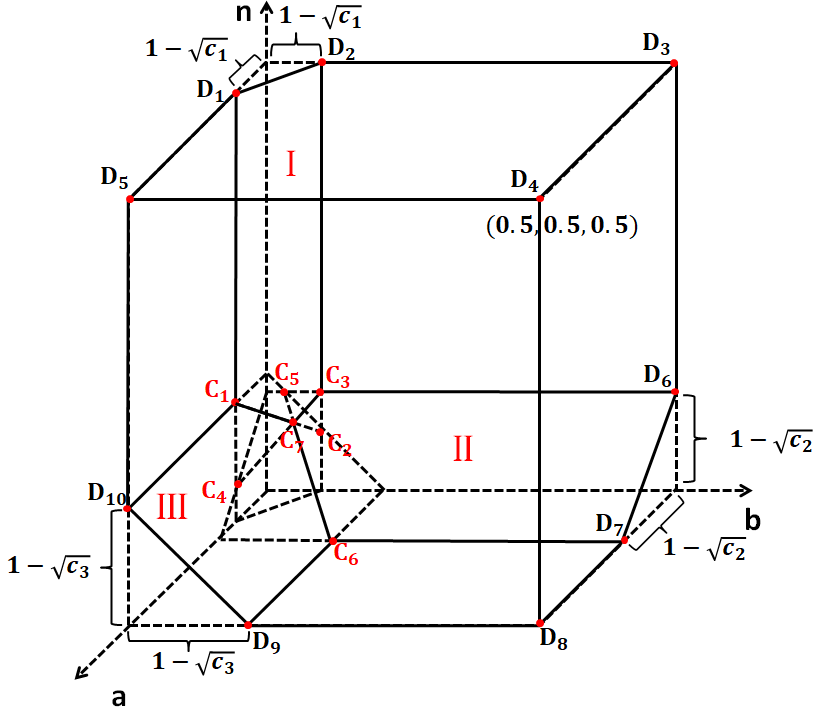}
		}
		\caption{The geometry of the feasible domain in the optimization problem in Eq. \eqref{opt2} for $\mathfrak{c}_{1}\in [0.5, 0.75]$. When $1+\sqrt{\mathfrak{c}_3}< \sqrt{\mathfrak{c}_{1}} + \sqrt{\mathfrak{c}_2} $, the feasible domain is in shape of the subfigure (a). When $1+\sqrt{\mathfrak{c}_3}\geq \sqrt{\mathfrak{c}_{1}} + \sqrt{\mathfrak{c}_2}$, the feasible domain is in shape of the subfigure (b). Both polyhedra are formed by cutting three triangular prisms from a cube with length $\frac{1}{2}$. We label these cross-sectional planes as \Rmnum{1}, \Rmnum{2} and \Rmnum{3} respectively. The intersection points of the cross-sectional planes of \Rmnum{1} and \Rmnum{3} are $C_{1}$ and $C_{2}$, the intersection points of the cross-sectional planes of \Rmnum{1} and \Rmnum{2} are $C_{3}$ and $C_{4}$, the intersection points of the cross-sectional planes of \Rmnum{2} and \Rmnum{3} are $C_{5}$ and $C_{6}$. The vertex $C_{7}$ in subfigure (b) is the intersection point of \Rmnum{1}, \Rmnum{2} and \Rmnum{3}. $D_{i}$ ($1 \leq i \leq 10$) are other vertices of the polyhedra.}	
\end{center}
\end{figure}

Second, if $\mathfrak{c}_{1}\in (0.75,1]$, then the feasible domain in the optimization problem in Eq. (\ref{opt2}) is a smaller polyhedron contained in Fig. 3. So the maximum reached in the first case is an upper bound for this case, which completes the proof.
\end{proof}

Theorem \ref{th3} provides a quantum uncertainty relation for the geometric coherence under three measurement bases, in terms of the purity of quantum states and the incompatibility vector of three bases. Similarly, this quantum uncertainty relation is also state-independent for
pure state. Next we illustrate the lower bound of the geometric coherence by an explicit example.

\begin{example}\label{example3}
For the maximally coherent mixed states in Eq. (\ref{exstate1}) in qubit systems, we fix $\mathbb{X}=\{\frac{1}{\sqrt{5}}(-2|0\rangle+|1\rangle),\frac{1}{\sqrt{5}}(|0\rangle+2|1\rangle)\}$, $ \mathbb{Y}=\{\frac{1}{\sqrt{2}}(|0\rangle-|1\rangle),\frac{1}{\sqrt{2}}(|0\rangle+|1\rangle)\}$ and $\mathbb{Z}=\{|0\rangle,|1\rangle\}$. By calculation we get the incompatibility vector of $\mathbb{X}$, $\mathbb{Y}$, $\mathbb{Z}$ is $\vec{\mathfrak{c}}=(\mathfrak{c}_1, \mathfrak{c}_2, \mathfrak{c}_3)=(\frac{9}{10},\frac{4}{5},\frac{1}{2})$. Since $1+\sqrt{\mathfrak{c}_3}< \sqrt{\mathfrak{c}_{1}} + \sqrt{\mathfrak{c}_2}$, by Eq. \eqref{eq1 theorem3} we have the lower bound of $C_{g}^{\mathbb{X}}(\rho_{m})+C_{g}^{\mathbb{Y}}(\rho_{m})+C_{g}^{\mathbb{Z}}(\rho_{m})$ is
\begin{align}
C_{g}^{\mathbb{X}}(\rho_{m})+C_{g}^{\mathbb{Y}}(\rho_{m})+C_{g}^{\mathbb{Z}}(\rho_{m})\geq1-\frac{1}{2}\left[\sqrt{1+\frac{6(3-\sqrt{10})q^{2}}{5}}+\right.\nonumber\\
\left.\sqrt{1+\frac{2(14-6\sqrt{5}-3\sqrt{10}+5\sqrt{2})q^{2}}{5}}\,\right]. \label{eq1 exemple3}
	\end{align}
In fact the geometric coherence is
\begin{align}	 C_{g}^{\mathbb{X}}(\rho_{m})+C_{g}^{\mathbb{Y}}(\rho_{m})+C_{g}^{\mathbb{Z}}(\rho_{m})=\frac{1}{2}(1-\sqrt{1-\frac{9}{25}q^{2}})+\frac{1}{2}(1-\sqrt{1-q^{2}}),\label{eq2 exemple3}
	\end{align}
by Eq. (\ref{AA}). The geometric coherence $C_{g}^{\mathbb{X}}(\rho_{m})+C_{g}^{\mathbb{Y}}(\rho_{m})+C_{g}^{\mathbb{Z}}(\rho_{m})$ and
its lower bound in Eq. \eqref{eq1 exemple3} are plotted in Fig. 4.
\begin{figure}
		\centering
		\includegraphics[width=9cm]{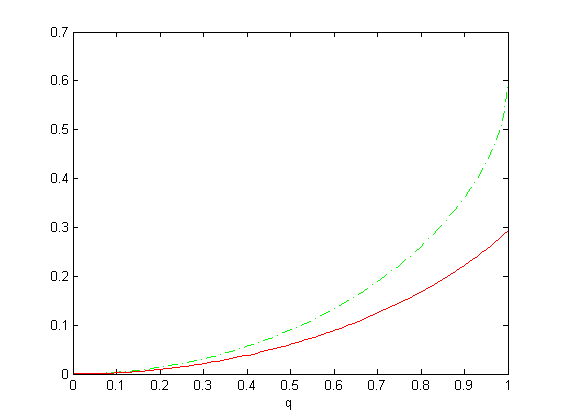}
		\caption{The evaluation of the geometric coherence $C_{g}^{\mathbb{X}}(\rho_{m})+C_{g}^{\mathbb{Y}}(\rho_{m})+C_{g}^{\mathbb{Z}}(\rho_{m})$. The green dot-dashed line is exact value in Eq. \eqref{eq2 exemple3} and the red solid line is the lower bound in Eq. \eqref{eq1 exemple3}.}	
	\end{figure}
\end{example}

\section{The geometric coherence and quantum state discrimination}

In this section we show the application of the geometric coherence in quantum state discrimination. First, let us recall the process of the ambiguous quantum state discrimination with von Neumann measurement. Suppose the sender draws at some states from $\{\rho_i\}$ with
probabilities $\{p_i\}$ and sends them to the receiver. Then the receiver tries to determine which state he has
received by a von Neumann measurement $\{\Pi_i\}$ satisfying $\sum_i \Pi_i =I$.
He declares the state is $\rho_j$ if the measurement result is $j$.
Since the probability to get the result $j$ is
$p(j|i) = tr(\Pi_j \rho_i)$ provided the state is $\rho_i$, so the maximal success probability to
discriminate this set of states $\{p_i,\rho_i\}$ is $\max_{\Pi_i} \sum_i p_i tr(\Pi_i \rho_i)$, where the maximum is over all von Neumann measurement $\{\Pi_i\}$. Correspondingly, the minimum error probability is
\begin{eqnarray}
P_e(\{p_i,\rho_i\})=1-\max_{\Pi_i} \sum_i p_i tr(\Pi_i \rho_i).
\end{eqnarray}
Ref. \cite{Chunhe Xiong} shows the geometric coherence of a quantum state amounts to the minimum error probability to discriminate a set of pure states via von Neumann measurement.
\begin{lemma}\cite{Chunhe Xiong}\label{lemma dis}
For any given orthonormal basis $\mathbb{X}=\{ |x_i\rangle \}$ in a $d$-dimensional system and any quantum state $\rho$,
the geometric coherence of $\rho$ is equal to the minimum error probability to discriminate the pure-state ensemble $\{\eta_i, |\psi_i\rangle\}$ with von Neumann measurement
\begin{eqnarray}
C_{g}^{\mathbb{X}}(\rho)=P_e(\{\eta_i,|\psi_i\rangle\}),
\end{eqnarray}
with $\eta_i=\langle x_i|\rho |x_i\rangle$ and $|\psi_i\rangle=\eta_i^{-1/2} \sqrt{\rho} |x_i\rangle$.
\end{lemma}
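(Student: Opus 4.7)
The plan is to show $\max_{\sigma\in\mathcal{I}_{\mathbb{X}}} F(\rho,\sigma)$ equals the maximum success probability of discriminating the ensemble $\{\eta_i,|\psi_i\rangle\}$ with a von Neumann measurement; the lemma then follows by subtracting from $1$ and invoking the definition of $C_{g}^{\mathbb{X}}$. As a sanity check, one verifies that the construction makes sense: each $|\psi_i\rangle$ has unit norm since $\langle\psi_i|\psi_i\rangle=\eta_i^{-1}\langle x_i|\rho|x_i\rangle=1$, and $\sum_i\eta_i=\mathrm{tr}(\rho)=1$, so $\{\eta_i,|\psi_i\rangle\}$ is indeed a pure-state ensemble.

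The first step is to apply Uhlmann's variational formula, which for the fidelity in Eq.~(2) reads $\sqrt{F(\rho,\sigma)}=\mathrm{tr}\sqrt{\sqrt{\rho}\sigma\sqrt{\rho}}=\|\sqrt{\rho}\sqrt{\sigma}\|_1=\max_U|\mathrm{tr}(U\sqrt{\rho}\sqrt{\sigma})|$, the maximum ranging over all unitaries $U$. Any incoherent $\sigma$ takes the form $\sigma=\sum_i s_i|x_i\rangle\langle x_i|$ with $s_i\geq 0$ and $\sum_i s_i=1$, so $\sqrt{\sigma}=\sum_i\sqrt{s_i}|x_i\rangle\langle x_i|$ and $\mathrm{tr}(U\sqrt{\rho}\sqrt{\sigma})=\sum_i\sqrt{s_i}\,\langle x_i|U\sqrt{\rho}|x_i\rangle$.

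Next I would jointly optimize over $\{s_i\}$ and $U$. Setting $c_i=\langle x_i|U\sqrt{\rho}|x_i\rangle$ and absorbing phases by composing $U$ with a diagonal unitary in the basis $\mathbb{X}$ reduces the problem to maximizing $\sum_i\sqrt{s_i}|c_i|$; Cauchy--Schwarz gives $\sum_i\sqrt{s_i}|c_i|\leq\sqrt{\sum_i|c_i|^2}$, with equality when $s_i\propto|c_i|^2$. Hence $\max_{\sigma\in\mathcal{I}_{\mathbb{X}}} F(\rho,\sigma)=\max_U\sum_i|\langle x_i|U\sqrt{\rho}|x_i\rangle|^2$. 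Using $\sqrt{\rho}|x_i\rangle=\sqrt{\eta_i}|\psi_i\rangle$ and setting $|\phi_i\rangle=U^{\dagger}|x_i\rangle$, which sweeps out all orthonormal bases as $U$ varies over unitaries, each summand becomes $\eta_i|\langle\phi_i|\psi_i\rangle|^2$. Therefore $\max_\sigma F(\rho,\sigma)=\max_{\{|\phi_i\rangle\}}\sum_i\eta_i\,\mathrm{tr}(|\phi_i\rangle\langle\phi_i||\psi_i\rangle\langle\psi_i|)=1-P_{e}(\{\eta_i,|\psi_i\rangle\})$, and the claimed identity $C_{g}^{\mathbb{X}}(\rho)=P_{e}(\{\eta_i,|\psi_i\rangle\})$ follows immediately.

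The main obstacle I anticipate is justifying that the joint optimization separates cleanly: specifically, that the phase freedom in $U$ really permits all $c_i$ to be made nonnegative simultaneously (so Cauchy--Schwarz is saturated) and that the reparameterization $|\phi_i\rangle=U^{\dagger}|x_i\rangle$ generates exactly the family of von Neumann measurements appearing in the definition of $P_{e}$. Both points rest on the observation that a diagonal unitary in the $\mathbb{X}$ basis acts only on the phases of the $c_i$ without altering their moduli, which makes the reduction $c_i\geq 0$ a genuine without-loss-of-generality step and leaves the moduli $|c_i|$ matched to the discrimination overlaps.
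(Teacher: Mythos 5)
The paper does not actually prove this lemma --- it is imported verbatim from the cited reference (Xiong and Wu), so there is no in-paper argument to compare against. Your proposal supplies a correct, self-contained derivation, and it is essentially the standard one from that reference. The chain of identities is sound: $\mathrm{tr}\sqrt{\sqrt{\rho}\,\sigma\sqrt{\rho}}=\|\sqrt{\rho}\sqrt{\sigma}\|_1=\max_U|\mathrm{tr}(U\sqrt{\rho}\sqrt{\sigma})|$ is Uhlmann's formula together with the variational characterization of the trace norm; composing $U$ with a diagonal unitary $\sum_j e^{\mathrm{i}\theta_j}|x_j\rangle\langle x_j|$ multiplies each $c_i=\langle x_i|U\sqrt{\rho}|x_i\rangle$ by $e^{\mathrm{i}\theta_i}$ without changing $|c_i|$, so the reduction to nonnegative $c_i$ and the subsequent Cauchy--Schwarz saturation at $s_i\propto|c_i|^2$ are legitimate (and the two maxima over $\{s_i\}$ and $U$ commute, so the joint optimization does separate). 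The substitution $\sqrt{\rho}|x_i\rangle=\sqrt{\eta_i}|\psi_i\rangle$ and $|\phi_i\rangle=U^{\dagger}|x_i\rangle$ then lands exactly on $\max_{\{|\phi_i\rangle\}}\sum_i\eta_i|\langle\phi_i|\psi_i\rangle|^2=1-P_e$. The only point worth stating explicitly is that your optimization ranges over rank-one projective measurements $\Pi_i=|\phi_i\rangle\langle\phi_i|$, which is the measurement class intended in the definition of $P_e$ for a $d$-outcome von Neumann measurement on a $d$-dimensional system; you should also note the implicit assumption $\eta_i>0$ for all $i$ (already needed for the lemma's statement to make sense). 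Neither is a gap in the argument.
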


In fact, for any given quantum state $\rho$, the ensemble $\{\eta_i,|\psi_i\rangle\}$ with $\eta_i=\langle x_i|\rho |x_i\rangle$ and $|\psi_i\rangle=\eta_i^{-1/2} \sqrt{\rho} |x_i\rangle$ is a pure-state ensemble of $\rho$, that is, $\rho=\sum_i \eta_i |\psi_i\rangle\langle \psi_i|$. So the minimum error probability to discriminate this pure-state ensemble
corresponds to the geometric coherence of the associated mixed state. In light of the relation between the geometric coherence and purity in Theorem \ref{th1*},
the minimum error probability to discriminate any pure-state ensemble and the purity are connected.
\begin{corollary}\label{cor4}
For any qubit systems, the minimum error probability to discriminate any pure-state ensemble $\{p_i, |\psi_i\rangle\}_{i=1}^2$ by von Neumann measurement is bounded from above as
\begin{align}\label{eq dis}
	 P_e(\{p_i, |\psi_i\rangle\}_{i=1}^2) \leq \frac{1}{2}-\frac{1}{\sqrt{2}}\sqrt{1-\mathcal{P}},
	\end{align}
with $\mathcal{P}$ the purity of the associated quantum state $\rho=\sum_{i=1}^2 p_i |\psi_i\rangle \langle\psi_i|$.
	\end{corollary}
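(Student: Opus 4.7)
The plan is to reduce the corollary to Theorem~\ref{th1*} via Lemma~\ref{lemma dis}. Given an arbitrary pure-state ensemble $\{p_i,|\psi_i\rangle\}_{i=1}^{2}$ with associated qubit state $\rho=\sum_{i} p_i|\psi_i\rangle\langle\psi_i|$, the goal is to exhibit an orthonormal basis $\mathbb{X}=\{|x_1\rangle,|x_2\rangle\}$ for which the given ensemble coincides with the canonical one of Lemma~\ref{lemma dis}, namely $p_i=\langle x_i|\rho|x_i\rangle$ and $\sqrt{\rho}\,|x_i\rangle=\sqrt{p_i}\,|\psi_i\rangle$ for $i=1,2$. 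Once such a basis is available, Lemma~\ref{lemma dis} identifies $P_e(\{p_i,|\psi_i\rangle\})$ with $C_g^{\mathbb{X}}(\rho)$, and Theorem~\ref{th1*} delivers the upper bound $\frac{1}{2}-\frac{1}{\sqrt{2}}\sqrt{1-\mathcal{P}}$.

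To construct $\mathbb{X}$ in the generic rank-two case, I would start from the spectral decomposition $\rho=\sum_{k=1}^{2}\lambda_k|\phi_k\rangle\langle\phi_k|$ and appeal to the Schr\"odinger--Hughston--Jozsa--Wootters theorem, which produces a unitary $U$ with $\sqrt{p_i}\,|\psi_i\rangle=\sum_{k}U_{ik}\sqrt{\lambda_k}\,|\phi_k\rangle$. Setting $|x_i\rangle=\sum_{k}U_{ik}|\phi_k\rangle$, the orthonormality of $\{|x_1\rangle,|x_2\rangle\}$ follows from the unitarity of $U$, and a one-line computation gives $\sqrt{\rho}\,|x_i\rangle=\sum_{k}U_{ik}\sqrt{\lambda_k}\,|\phi_k\rangle=\sqrt{p_i}\,|\psi_i\rangle$ as required. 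The degenerate rank-one case (pure $\rho$) I would handle separately by a direct construction: either some $p_i$ vanishes and the bound is trivial since $\mathcal{P}=1$ makes the right-hand side equal to $\tfrac{1}{2}$, or both $|\psi_i\rangle$ coincide with the unique eigenvector of $\rho$ up to phase, in which case a suitable orthonormal basis is written down explicitly by completing $\sqrt{p_i}\,e^{\mathrm{i}\theta_i}|\phi\rangle$ along the orthogonal complement.

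The main conceptual obstacle is precisely the existence of $\mathbb{X}$ in the first step; the remainder is a mechanical concatenation of two prior results. Care is only needed for pure $\rho$, where the HJW unitary degenerates to an isometry and the basis must be completed by hand, and for the boundary value $\eta_i=0$, where $\sqrt{\rho}\,|x_i\rangle=0$ renders the identification $|\psi_i\rangle=\eta_i^{-1/2}\sqrt{\rho}\,|x_i\rangle$ formal but harmless since that outcome carries zero probability.
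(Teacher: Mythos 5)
Your proposal is correct and follows essentially the same route as the paper: both invoke the unitary freedom of pure-state decompositions (HJW) relative to the spectral decomposition of $\rho$ to manufacture an orthonormal basis $\mathbb{X}$ with $\sqrt{\rho}\,|x_i\rangle=\sqrt{p_i}\,|\psi_i\rangle$ and $p_i=\langle x_i|\rho|x_i\rangle$, and then chain Lemma~\ref{lemma dis} with Theorem~\ref{th1*}. Your separate treatment of the rank-one (pure $\rho$) case, where the HJW matrix degenerates to an isometry, is a small point of extra care that the paper's proof passes over silently, but it does not change the argument.
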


\begin{proof}
Suppose $\{p_i, |\psi_i\rangle\}_{i=1}^2$ is any pure-state ensemble in the qubit system, and $\rho=\sum_{i=1}^2 p_i |\psi_i\rangle \langle\psi_i|$ is the quantum state associated to it. We further suppose $\rho=\sum_{j=1}^2 \lambda_j |\phi_j\rangle \langle\phi_j|$ is the spectral decomposition of $\rho$. Then the two pure-state decompositions of $\rho$ are related as $\sqrt{p_i} |\psi_i\rangle=\sum_j \sqrt{\lambda_j} U_{ij} |\phi_j\rangle$, $\forall i$, for some unitary operator $U=(U_{ij})$ \cite{M.A.}. Since the eigenvectors of $\rho$ compose an orthonormal basis in the qubit system and $U=(U_{ij})$ is unitary, so there exist another orthonormal basis denoted as $\mathbb{X}=\{ |x_i\rangle \}$ such that $U_{ij}=\langle \phi_j |x_i\rangle$. Under the selected orthonormal basis $\mathbb{X}$, we have $\sqrt{p_i} |\psi_i\rangle=\sqrt{\rho} |x_i\rangle$, $\forall i$. Therefore the pure-state ensemble $\{p_i, |\psi_i\rangle\}_{i=1}^2$ satisfies $|\psi_i\rangle=p_i^{-1/2} \sqrt{\rho} |x_i\rangle$ and $p_i=\langle x_i|\rho |x_i\rangle$. By Lemma \ref{lemma dis} and Theorem \ref{th1*}, we have $P_e(\{\eta_i,|\psi_i\rangle\})=C_{g}^{\mathbb{X}}(\rho)\leq \frac{1}{2}-\frac{1}{\sqrt{2}}\sqrt{1-\mathcal{P}}$, which completes the proof.
\end{proof}

Corollary \ref{cor4} shows the minimum error probability to discriminate any pure-state ensemble  $\{p_i, |\psi_i\rangle\}_{i=1}^2$ is up bounded by the purity of the corresponding mixed states. Different from Lemma \ref{lemma dis}, the minimum error probability is just relied on the purity and independent of the reference basis.
Eq. (\ref{eq dis}) can be rewritten in form of the complementarity relation as
\begin{align}\label{eq dis com}
	 P_e(\{p_i, |\psi_i\rangle\}) +\frac{1}{{2}}\sqrt{S_L(\rho)} \leq \frac{1}{2},
	\end{align}
which imposes a limit on the amount of the minimum error probability for any pure-state ensemble  $\{p_i, |\psi_i\rangle\}_{i=1}^2$ with given mixedness.

\begin{example}
For any orthonormal basis $\mathbb{X}=\{ |x_i\rangle \}$ in qubit system, let
\begin{eqnarray}
\begin{array}{rcl}
|\psi_1\rangle=\cos\theta|x_1\rangle + \sin\theta|x_2\rangle,\\
|\psi_2\rangle=\sin\theta|x_1\rangle + \cos\theta |x_2\rangle,
\end{array}
\end{eqnarray}
and $p_1=p_2=1/2$ with $\theta\in[-\frac{\pi}{4},\frac{\pi}{4}]$. Now we consider the discrimination of the ensemble $\{p_i, |\psi_i\rangle \}$ by von Neumann measurement. On one hand, the minimum  error probability to discriminate the ensemble $\{p_i, |\psi_i\rangle \}$ is $P_e(\{p_i, |\psi_i\rangle\})=\sin^{2}\theta$ and the optimal von Neumann measurement $\{\Pi_i\}$ is $\Pi_1=|x_{1}\rangle \langle x_{1}|$ and $\Pi_2=|x_{2}\rangle \langle x_{2}|$ by Ref. \cite{Helstrom}. On the other hand, the quantum state with respect to this ensemble $\{p_i, |\psi_i\rangle \}$ is
\begin{equation}
\rho=\frac{1}{{2}} (|x_1\rangle \langle x_1|+ |x_2\rangle \langle x_2|) + \sin\theta\cos\theta( |x_1\rangle \langle x_2|+ |x_2\rangle \langle x_1|)
\end{equation}
with purity $\mathcal{P}=\frac{1}{2}+2\sin^{2}\theta\cos^{2}\theta$. In this case, the minimum  error probability reaches the upper bound in Corollary \ref{cor4}.
\end{example}

\section{Conclusions}

To summarize, we study the trade-off relations of the geometric coherence in qubit systems.
We first show the complementarity relation between the geometric coherence and the purity of quantum states. Then we derive the quantum uncertainty relations of the geometric coherence on two and three general measurement bases respectively, in terms of the purity of the quantum states and the incompatibility of the bases. At last, we obtain the complementarity relation between the minimum error probability for discriminating a pure-states ensemble and the mixedness of quantum states. These trade-off relations not only provide the limit to the amount of quantum coherence, but also reveal the intimate relations between different quantities. Since the geometric coherence is important both geometrically and operationally, we hope this work is helpful for the characterization and application of the geometric coherence.

Generalizations of the results to general high dimensional systems and more orthonormal bases are
natural extensions of this work. On one hand, for
general $m$ orthonormal bases, the quantum uncertainty relation of geometric coherence depends on the incompatibility vector of dimensional $\frac{m(m-1)}{2}$.
Although it is not feasible to depict the feasible domain as Fig. 1 or Fig. 3, a lower bound may be derived by convex analysis. On the other hand,
for general high dimensional systems, the incompatibility of two orthonormal bases may be not enough to get a good lower bound. One may need to
make use of more parameters such as the second largest value of their inner products.

\section{Acknowledgement}
This work is supported by the National Natural Science Foundation of
China under grant Nos. 12171044.

\end{document}